\documentclass[12pt, draftclsnofoot, onecolumn]{IEEEtran}
\usepackage{setspace}
\usepackage{hyperref}
\usepackage{amsthm}
\usepackage{epsfig}
\usepackage{amscd}
\usepackage{float,color,graphicx,subfigure}
\usepackage{amssymb}
\usepackage{amsmath}
\usepackage{lscape}
\usepackage{color}
\usepackage{tabularx}
\usepackage{multirow}
\usepackage{booktabs}
\usepackage{siunitx}
\usepackage{breqn}
\usepackage{lipsum}
\usepackage{mathrsfs}
\usepackage{placeins}
\newcommand{\mb}[1]{\boldsymbol{#1}}

\newtheorem{prop}{Proposition}
\newtheorem{example}{Example}

\newcommand{\thickhline}{\noalign{\hrule height 0.8pt}}
\textheight 10.4in
\normalsize
\setlength{\textheight}{9.5in} \setlength{\textwidth}{6.75in}
\linespread{1.8}


\begin{document}
\sloppy
\title{Novel Method for Multi-Dimensional Mapping of Higher Order Modulations for BICM-ID Over Rayleigh Fading Channels}

\author{
\IEEEauthorblockN{Hassan M. Navazi and Md. Jahangir Hossain, \emph{Member, IEEE}\\}
\IEEEauthorblockA{
The University of British Columbia, Kelowna, BC, Canada\\
\emph{hnavazi@alumni.ubc.ca}, \emph{jahangir.hossain@ubc.ca}
}
}
\maketitle
\begin{abstract}
 Multi-dimensional (MD) mapping  offers more flexibility in mapping design for bit-interleaved coded modulation with iterative decoding (BICM-ID) and potentially improves the bandwidth efficiency. However,  for higher order signal constellations, finding suitable MD mappings is a very complicated task due to the large  number of  possible mappings. In this paper, a novel mapping  method is introduced to construct efficient MD mappings to improve the error performance of BICM-ID over Rayleigh fading
 channels. We propose to break the MD mapping design problem into four distinct $2$-D mapping functions.  The $2$-D mappings are designed such that the resulting MD mapping improves the BICM-ID error performance at low signal to noise ratios (SNRs). We also develop cost functions that can be optimized to improve the  error performance at high SNRs. \color{black} The proposed mapping method is very simple compared to well-known mapping methods, and it can achieve suitable MD mappings for different modulations including higher order modulations for BICM-ID. Simulation results show that our mappings significantly outperform the previously known mappings at a target bit error rate (BER) of $10^{-6}$. Our mappings also offer a lower error-floor compared to their well-known counterparts.
\end{abstract}

\begin{IEEEkeywords}
BICM-ID, multi-dimensional signal mapping, QAM, Rayleigh fading channels.
\end{IEEEkeywords}

\section{Introduction}
\label{intro}
\color{black} Trellis coded modulation (TCM) \cite{TCM}  improved the bit error rate (BER) performance of coded modulation by maximizing the Euclidean distance among the coded signal sequences.  To improve the BER of TCM over Rayleigh fading channels, bit interleaved coded modulation (BICM)  was introduced by Zehavi \cite{Zehavi}.   \color{black} BICM offers good performance over Rayleigh fading channels. However, the random modulation caused by the interleaver degrades the BICM performance over additive white Gaussian noise (AWGN) channels. To address this problem, iterative decoding was used  at the receiver. The resulting system is referred to as BICM with iterative decoding (BICM-ID) and is investigated in  \cite{BICMID}-\cite{Benedetto}.   
 BICM-ID offers better performance over AWGN and Rayleigh fading channels. \color{black} BICM can also use other iterative decoding schemes such as low density parity check code (LDPC). In \cite{Exit4}, it is demonstrated that BICM-ID with signal space diversity (BICM-ID-SSD) outperforms the LDPC-BICM over fading channels. BICM-ID without SSD can also outperform LDPC-BICM when the number of iterations at the decoder is smaller than a certain number, which makes the decoder simpler. Moreover, compared to LDPC-BICM, BICM-ID uses a simple convolutional code instead of a more complex LDPC code. As such, BICM-ID offers a lower system complexity. Consequently, BICM-ID is a good candidate as a coded modulation especially when the system complexity becomes a more important concern. \color{black}
  
 It is widely known that the signal labeling map (mapping) plays a crucial role in BICM-ID performance\cite{8PSK_signaling}. Signal labeling  is defined as the assignment of a binary sequence to a single symbol from a signal constellation. It is also  referred to as multi-dimensional (MD) mapping when a sequence of binary bits is mapped to a vector of symbols instead of a single symbol. 
MD mapping is more flexible to design and also offers   better bandwidth efficiency \cite{MD-BQPSK-Simoen}. The MD labeling process is more adaptable to different design guidelines because MD space provides more diverse Euclidean distances among symbols. In \cite{MD_mapping_TCM}, MD labeling was used for TCM and made the system's BER better through using the available bandwidth more efficiently. This development motivated researchers to use the MD labeling technique and its efficient use of bandwidth to further improve the BER of BICM-ID \cite{MD-BQPSK-Simoen}, \cite{MD_BICMID}-\cite{MD_16_64QAM}.  The MD labeling technique can also be applied to higher order modulation  to increase the data rate of BICM-ID because using a larger constellation makes it possible to send more bits in the same signaling rate. However, providing a suitable MD labeling of a large constellation for BICM-ID is very challenging because of the large number of possible mappings. In general, for a $2N$-D $2^m$-ary modulation, there are $2^{mN}!$ possible mappings, where $!$ denotes the factorial operation. For example, the number of possible $4$-D mappings for a $256$-ary quadrature amplitude modulation ($256$-QAM) is $5.16 \times 10^{287193}$, which is an astronomical figure. \color{black} It is imprtant to note that MD mapping improves the perforamnce of BICM-ID at a particular expense of the system's complexity \cite{MD-Hyper-Ha}-\cite{MD-8PSK-Ha}. However, the study of its complexity is beyond the scope of this paper. 

\color{black}

Different labeling approaches for BICM-ID such as the genetic algorithm (GA), reactive tabu search (RTS) algorithm, extrinsic information transfer (EXIT)-based search algorithm, binary switching algorithm (BSA), and random labeling technique have been extensively investigated \cite{8PSK_signaling}-\cite{High_Order_Map}. Indeed, high computational complexity is the main pitfall of all the proposed computer search based methods  in the literature when looking for a good mapping of a large constellation.

The GA and RTS algorithms have been used in \cite{GA} and \cite{RTS}, respectively, to find the optimum mappings for BICM-ID. In these studies, however,  the authors have not reported results for constellations larger than $64$-QAM due to a very high computational complexity.  In \cite{Exit1}-\cite{Exit3}, an EXIT-based method is proposed to find suitable mappings that improve the iterative decoding systems BER at any signal to noise ratio (SNR) with an arbitrary number of iterations. \color{black} In \cite{Exit4}, the authors have designed an EXIT-chart aided serach method to develop  capacity approaching coded modulations. In particular, they have proposed a BICM-ID system with signal space diversity that approaches the channel capacity in both the fading and non-fading channels. \color{black} However, the EXIT-based search considers both the modulator and encoder in detail and also the iterations between the decoder and demodulator. This makes the process complicated for finding a good mapping of a large MD constellation among a large number of possible labelings.  The binary switching algorithm (BSA) \cite{BSA}, which is the best known mapping search method for BICM-ID, becomes intractable when searching for a good mapping of an MD constellation due to the huge search complexity \cite{MD_BICMID},\cite{MD-8PSK-Ha}. Although it is demonstrated in \cite{MD_BICMID} and \cite{rndm_map} that the random labeling  technique results in suitable mappings for BICM-ID, it is still computationally complex to look for a good mapping of a large constellation. This is because the random labeling method searches for a good mapping among a large set of randomly generated mappings.
 In addition to these computer search based methods, a heuristic method has also been explored in \cite{MD_16_64QAM} to construct MD labeling for BICM-ID. But, this method is limited to $16$- and $64$-QAM. Moreover, this method  is  not
designed for Rayleigh fading channels. 

In this paper, we propose a novel method to develop suitable MD mappings to improve the BER performance of BICM-ID over Rayleigh fading channels in both the low and high SNR regions. To improve the BER performance of the system, we increase the harmonic mean of the minimum squared Euclidean distance (MSED)  \cite{8PSK_signaling},\cite{Chindapol_16QAM} offered by the mapping. 


The reported analytical and numerical results confirm the efficiency of the achieved mappings.
The novelty and contributions of our work are as follows. (i) We break the MD labeling problem into four distinct $2$-D mappings, which makes it easier to optimize the MD labeling. (ii) We design the four $2$-D mappings such that  the resulting MD mapping improves the BER of BICM-ID at low SNRs.  (iii) We develop cost functions that can be optimized over the $2$-D mappings to improve the performance of the resulting MD mapping's at high SNRs.  (iv) We develop efficient MD mappings of higher order modulations including $2^m$-ary ($m= 4, 5, ..., 10$) modulations. Finally, (v) we propose efficient MD mappings of different modulation types  including QAM, phase shift keying (PSK), and irregular modulations. 
\color{black} Compared to the labeling method in \cite{MD_16_64QAM}, 
 the proposed method in this paper has the following additional contributions. (i) The presemnt study is not limited to a particular modulation type, i.e., it is not modulation spesific. However, the method in \cite{MD_16_64QAM} is only for square QAMs. (ii) The proposed method in this paper can generate efficient mappings for modulations with different orders while the method in \cite{MD_16_64QAM} is only for $16$- and $64$-ary particular modulations. (iii) Finally, the proposed method in this paper is designed specifically for Rayleigh fading channels. As a result, over fading channels, the resulting mappings outperform the mappings in \cite{MD_16_64QAM}. \color{black}


The rest of this paper is organized as follows. The BICM-ID system model is described in Section \ref{sysmod}. The proposed MD labeling method is introduced in Section \ref{Porp.Meth}. In Section \ref{cost_func}, cost functions are developed and optimized to obtain good MD mappings. Numerical results and discussions are presented in Section \ref{num_result}.  Finally, Section \ref{conc.} summarizes the conclusions.

\section{System Model}
\label{sysmod}

\begin{figure}[t]
\center
\includegraphics[width= 0.6\columnwidth, viewport = 55mm 207.07mm 135mm 250.00mm]{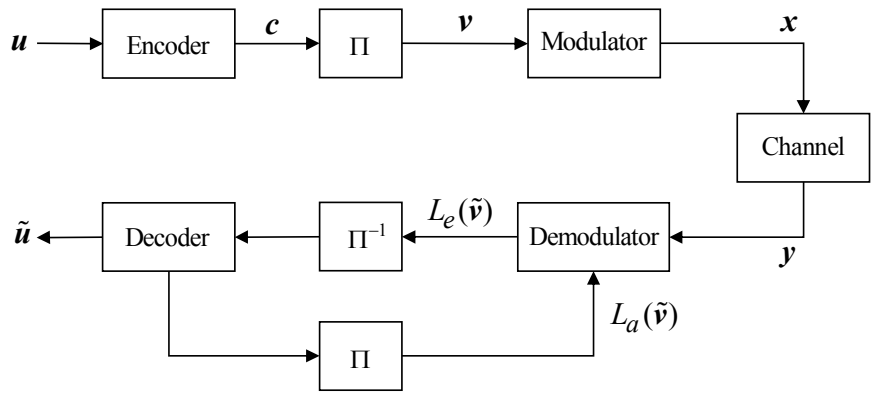}
\caption{The block diagram of a BICM-ID system.}
\label{sys_mod}
\end{figure}

The block diagram of a conventional BICM-ID system is shown in Fig. \ref{sys_mod}. The transmitter is constructed from serial concatenation of a convolutional encoder, a bit interleaver and a modulator. In the system considered in this paper, the modulator maps a sequence of $mN$ bits to a vector of $N$ consecutive $2^{m}$-ary signal points, using a MD mapping function $\mu : \lbrace0, 1\rbrace^{mN}\longrightarrow \mb{\chi} = \chi^{N}$. Let ${\mb x} \in \mb{\chi}$ be a $2N$-D signal point represented as

\begin{equation}
{\mb x} = (x_{1}, x_{2}, \cdots, x_{N}),
\end{equation} where $x_{i} \in \chi$. ${\mb x}$ is labeled by an $mN$-bit binary sequence $\mb{v}$ as

\begin{equation}
{\mb x} = \mu(\mb{v}).
\end{equation}

The average energy per signal-vector is assumed to be $1$, i.e., $E_{\mb{x}} = 1$. It is assumed that the  channel state information is known at the receiver. The received signal-vector corresponding to the transmitted signal-vector $\mb{x}$ is given by

\begin{equation}
\mb{y} = h \mb{x} + \mb{n}, 
\end{equation}where $ h$ is the Rayleigh fading coefficient corresponding to $\mb{x}$, and $\mb{n}$ is a vector of $N$ additive complex white Gaussian noise samples with zero-mean and variance $N_0$. We 
 assume that the fading coefficient for all signals in a signal-vector remains the same. Clearly, for the AWGN channels $h=1$.

At the receiver, from the received signal $y_t$ and the \textit{a priori} log-likelihood ratio (LLR) of the coded bits, the demapper computes the \textit{extrinsic} LLR for each of the  bits in the received symbol as described in \cite{Brink}. Then, the \textit{extrinsic} LLRs are permuted by the random deinterleaver
 and used by the channel
decoder. The decoder then calculates the \textit{extrinsic} LLR on the coded bits using the  BCJR algorithm \cite{BCJR}. These LLRs are interleaved and then fed back to the demapper to be   used as the \textit{a priori} LLRs in the next iteration. 

\section{Proposed mapping method}
\label{Porp.Meth}
As mentioned earlier, for a $2N$-D $2^{m}$-ary modulation there are $2^{mN}!$ possible mappings. In fact, a comprehensive computer search to find good mappings becomes intractable quickly as the  modulation order increases. \color{black} Even \color{black} the well-known BSA mapping search method cannot be used directly to obtain good \color{black} MD mappings of higher order modulations. \color{black} Therefore, we propose an efficient technique to find good MD mappings for  BICM-ID systems over Rayleigh fading channels. 

\subsection{Mapping Design Guideline}
\label{harmonic}
\color{black} The performance of BICM-ID over Rayleigh fading channels is influenced by the \textit{harmonic mean} of the MSED, which is calculated for a given mapping function, $\mu$, applied to signal set $\mb{\chi}$. For a $2N$-D mapping of a $2^{m}$-ary constellation, the \textit{harmonic mean} of the MSED is given by \cite{MD-Hyper-Ha},\cite{Chindapol_16QAM} \color{black}

\begin{equation}
\Phi(\mu, \mb{\chi}) = \left(\dfrac{1}{mN2^{mN}}\sum_{i=1}^{mN} \sum_{b=0}^{1} \sum_{{\mb x}\in \mb{\chi}_{b}^{i}} \dfrac{1}{\Vert {\mb x}- \hat{\mb x} \Vert^{2}}\right) ^{-1},
\label{Harmonic_fading}
\end{equation} where $\mb{x} = (x_{1}, x_{2}, \cdots, x_{N})$ is a $2N$-D signal point and $\mb{\chi}_{b}^{i}$ is the subset of signal points in $\mb{\chi}$ whose labels take value $b$ at the $i^{th}$ bit position. \color{black} For the performance in the low SNR region,  $\hat{\mb x} = (\hat{x}_{1}, \hat{x}_{2}, \cdots, \hat{x}_{N})$ refers to the
nearest neighbor \footnote{Th symbol-vectors with the minimum Euclidean distance are refereed to as the nearest neighbours.} of $\mb{x}$ in $\mb{\chi}_{\bar{b}}^{i}$,
and (\ref{Harmonic_fading}) is referred to as the harmonic
mean of the MSED before feedback. For the asymptotic
performance (performance in the high SNR region), $\mb{\chi}_{\bar{b}}^{i}$
involves only one symbol-vector $\hat{\mb x}$, which
is different from $\mb{x}$ only in the $i^{th}$ bit position \cite{Chindapol_16QAM}. In this
case, (\ref{Harmonic_fading}) is referred to as the harmonic mean of the MSED after
feedback, which is denoted by $\hat\Phi(\mu, \mb{\chi})$. To achieve good
performance in the low and high SNR regions, a large value of $\Phi(\mu, \mb{\chi})$ and $\hat\Phi(\mu, \mb{\chi})$ is required, respectively. However, maximizing (\ref{Harmonic_fading})  is a very  complex \color{black} problem even for  a modulation such as $64$-QAM \cite{RTS}. We propose an innovative approach to generate MD mappings using $2$-D mappings such that the resulting MD mapping has a greater value of $\Phi(\mu, \mb{\chi})$. \color{black} Later, we develop cost functions that are optimized over the employed $2$-D mappings to achieve a high value of $\hat{\Phi}(\mu, \mb{\chi})$ for the MD mapping. Our cost functions are very simple and give excellent results, even for higher order constellations such as MD $1024$-QAM.



\subsection{MD Mapping Using $2$-D Mappings}
\label{three_b}
 Let  $\mb{l} = (l_{1}, l_{2}, \cdots, l_{mN})$ be an $mN$-bit binary label. We can write $\mb{l} = (\mb{l}_{1}, \mb{l}_{2}, \cdots, \mb{l}_{N})$, where $\mb{l}_{i}$ is a $m$-bit binary label and is given by

\begin{equation}
\mb{l}_{i} = (l_{(i-1)m+1}, \cdots, l_{im});~~ i = 1, ..., N.
\end{equation} 
Suppose that $\mathcal{L}$ denotes the set of all $mN$-bit binary labels and $\mathcal{L}_{e}$ and $\mathcal{L}_{o}$  represent the subset of all $\mb{l}\in{\mathcal{L}}$ with even and odd Hamming weights, respectively. The MD mapping problem can be broken into four mappings in
2-D signal space as described below. According to the proposed MD mapping function, i.e., $\mu$, label $\mb{l}$ is mapped to the $2N$-D signal point $\mb{x} = (x_{1}, \cdots, x_{N})$ as given below

\begin{eqnarray}
\label{mapping_func}
x_{i} =\left\{
\begin{array}{llll}
\lambda_{el}(\mb{l}_{i}) & \mbox{if}~ i=1 ~ \mbox{, }~ \mb{l} \in \mathcal{L}_{e}, \\
\lambda_{ol}(\mb{l}_{i}) & \mbox{if}~ i=1 ~ \mbox{, }~ \mb{l} \in \mathcal{L}_{o},\\
\lambda_{er}(\mb{l}_{i}) & \mbox{if}~ i \geqslant 2 ~ \mbox{, }~ \mb{l} \in \mathcal{L}_{e},\\
\lambda_{or}(\mb{l}_{i}) & \mbox{if}~ i \geqslant 2 ~ \mbox{,}~ \mb{l} \in \mathcal{L}_{o},\\
\end{array}; ~~ i = 1, ..., N,\right.
\end{eqnarray}
where $\lambda_{el}$, $\lambda_{ol}$, $\lambda_{er}$, and $\lambda_{or}$ are $2$-D mapping functions, which will be discussed later in this section.
In the applied mapping, let $\mb{\chi}_{e}$ and $\mb{\chi}_{o}$ represent the subset of signal points in $\mb{\chi}$ whose labels belong to $\mathcal{L}_{e}$ and $\mathcal{L}_{o}$, respectively. Without loss of generality, assume that $\mb{x} \in \mb{\chi}_{e}$ and $\hat{\mb{x}} \in \mb{\chi}_{o}$ where $\hat{\mb x} = (\hat{x}_{1}, \hat{x}_{2}, \cdots, \hat{x}_{N})$ is a signal point whose label is different from that of $\mb{x}$ only in one bit position.  
We partition the $2$-D signal constellation $\chi$ into two separate subsets with equal cardinalities and denote them as $\chi_{el}$ and $\chi_{ol}$. Then, we limit the first element in $\mb{x}$ and $\hat{\mb{x}}$, i.e., $x_{1}$ and $\hat{x}_{1}$, to belong to $\chi_{el}$ and $\chi_{ol}$, respectively. 
In (\ref{mapping_func}), $\lambda_{el}(.)$ and $\lambda_{ol}(.)$ each map an $m$-bit label to a $2$-D signal point chosen from $\chi_{el}$ and $\chi_{ol}$, respectively. However, $\chi_{el}$ and $\chi_{ol}$ involve only $2^{m-1}$ signal points while there are $2^{m}$ distinct $m$-bit labels. As a consequence, each signal point in $\chi_{el}$ and $\chi_{ol}$ should be mapped by two $m$-bit labels simultaneously. In order to obtain a one-to-one MD mapping function, we restrict the two labels that are mapped to a particular signal point in either $\chi_{el}$ or $\chi_{ol}$ to be different in an odd number \color{black} of bits. Specifically, we assume they are different only in the first bit position. This prohibits the labels with a Hamming distance larger than ($m+1$) bits to be mapped to the nearest neighbours in either of $\mb{\chi}_{e}$ or $\mb{\chi}_{o}$. As a result, the value of $\Phi(\mu, \mb{\chi})$ increases.  \color{black} 
On the other hand, there is no constraint on $\lambda_{er}(.)$ and $\lambda_{or}(.)$ except they need to be bijective.

\begin{prop}
\label{prop1}
Let us assume that $\mb{l}_1 = (l_1^1, l_1^2, \cdots, l_1^n)$ and $\mb{l}_2 = (l_2^1, l_2^2, \cdots, l_2^n)$ are two $n$-bit labels,  $D = d_H(\mb{l}_1,\mb{l}_2)$ is the Hamming distance between $\mb{l}_1$ and $\mb{l}_2$, and $W = w_1 + w_2$, where $w_j$ is the Hamming weight of $\mb{l}_j$ ($j = 1, 2$). If $W \in \mathbb{E}$, $D \in \mathbb{E}$, and if  $W \in \mathbb{E}$, $D \in \mathbb{E}$. 
 \end{prop}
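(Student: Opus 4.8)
The statement as typeset appears to contain a transcription error—its two clauses are identical—so I read the intended claim as the parity correspondence between the weight-sum and the Hamming distance: that $W$ and $D$ always share the same parity, i.e. $W \in \mathbb{E}$ if and only if $D \in \mathbb{E}$ (equivalently $W \in \mathbb{O}$ if and only if $D \in \mathbb{O}$). The plan is to prove the single clean identity $W - D = 2k$ for some non-negative integer $k$, from which both implications follow at once by reducing modulo $2$.

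First I would decompose both quantities position by position. Writing $D = \sum_{t=1}^{n} (l_1^t \oplus l_2^t)$ for the Hamming distance and $W = w_1 + w_2 = \sum_{t=1}^{n} (l_1^t + l_2^t)$ for the weight-sum, I would invoke the elementary bit identity $l_1^t \oplus l_2^t = l_1^t + l_2^t - 2\,l_1^t l_2^t$, which holds for all binary values. Summing over the $n$ positions yields
\begin{equation}
D = W - 2\sum_{t=1}^{n} l_1^t l_2^t,
\end{equation}
so that $W - D = 2\sum_{t=1}^{n} l_1^t l_2^t$ is an even integer, and one may take $k = \sum_{t=1}^{n} l_1^t l_2^t$.

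An equivalent and perhaps more transparent route is a case analysis over the four possible bit-pairs $(l_1^t, l_2^t)$ at each position: a pair $(0,0)$ contributes $0$ to both $W$ and $D$; the differing pairs $(0,1)$ and $(1,0)$ contribute $1$ to each; and the pair $(1,1)$ contributes $2$ to $W$ but $0$ to $D$. Thus every position contributes equal parity to $W$ and to $D$ except the $(1,1)$ positions, each of which shifts $W$ by an even amount relative to $D$. Either way one concludes $W \equiv D \pmod{2}$, and the two implications of the proposition are precisely the even and odd instances of this congruence.

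Since the argument reduces to an elementary parity identity, there is no genuine obstacle here; the only point requiring care is the bookkeeping—confirming that the $(1,1)$ coincidences, counted by $\sum_{t} l_1^t l_2^t$, are exactly what accounts for the gap between $W$ and $D$. I would therefore present the one-line identity as the main step and attach the case analysis as a confirming remark.
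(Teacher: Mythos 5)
Your proposal is correct, and it is essentially the paper's own argument in algebraic dress: your quantity $\sum_{t} l_1^t l_2^t$ is exactly the paper's $\bar{o}$ (the count of positions where both labels carry a $1$), and both proofs reduce to the identity $W = D + 2\bar{o}$, the paper deriving it by explicitly counting the differing positions via $o_j$ and $z_j$ while you obtain it from the per-bit identity $l_1^t \oplus l_2^t = l_1^t + l_2^t - 2\,l_1^t l_2^t$. You also correctly diagnosed the typo in the statement: the intended second clause is $W \in \mathbb{O} \Rightarrow D \in \mathbb{O}$, which is what the paper's proof concludes.
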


\begin{proof}
See Appendix \ref{apdx1}.

\end{proof}

\begin{prop}
\label{prop2}
In the proposed MD mapping function, $\mu$, there is a one-to-one correspondence between MD signal points and binary labels. \end{prop}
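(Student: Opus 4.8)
The plan is to establish that $\mu$ is a bijection by constructing an explicit inverse. Since the domain $\mathcal{L}$ and the codomain $\mb{\chi}=\chi^{N}$ both contain exactly $2^{mN}$ elements, it suffices to show that $\mu$ is injective, i.e. that the label $\mb{l}$ can be recovered uniquely from $\mb{x}=\mu(\mb{l})$. First I would read the parity of $\mb{l}$ directly off $x_{1}$: by construction $\chi$ is split into the two disjoint halves $\chi_{el}$ and $\chi_{ol}$, and (\ref{mapping_func}) forces $x_{1}\in\chi_{el}$ exactly when $\mb{l}\in\mathcal{L}_{e}$ and $x_{1}\in\chi_{ol}$ exactly when $\mb{l}\in\mathcal{L}_{o}$. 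Hence testing whether $x_{1}\in\chi_{el}$ decides whether $\mb{l}$ has even or odd Hamming weight, which in turn fixes which pair of $2$-D functions produced $\mb{x}$.

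Assume without loss of generality that $x_{1}\in\chi_{el}$, so $\mb{l}\in\mathcal{L}_{e}$ and $x_{i}=\lambda_{er}(\mb{l}_{i})$ for $i\geqslant 2$. Because $\lambda_{er}$ is bijective, each of these sub-labels is recovered uniquely as $\mb{l}_{i}=\lambda_{er}^{-1}(x_{i})$, $i=2,\dots,N$. The only remaining ambiguity lies in $\mb{l}_{1}$: since $\lambda_{el}$ sends the $2^{m}$ labels onto the $2^{m-1}$ points of $\chi_{el}$, there are exactly two candidate labels $\mb{a}$ and $\mb{b}$ with $\lambda_{el}(\mb{a})=\lambda_{el}(\mb{b})=x_{1}$, and by the design rule these two differ in precisely one bit position. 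Their Hamming weights therefore differ by one, so one of $\mb{a},\mb{b}$ is of even weight and the other of odd weight.

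The crux of the argument, and the step I expect to require the most care to phrase cleanly, is resolving this two-fold ambiguity using the global parity constraint. Having fixed $\mb{l}_{2},\dots,\mb{l}_{N}$, the partial weight $w'=\sum_{i=2}^{N} w(\mb{l}_{i})$ is determined. Membership in $\mathcal{L}_{e}$ forces $w(\mb{l}_{1})+w'$ to be even, i.e. $w(\mb{l}_{1})\equiv w'\pmod 2$; since $\mb{a}$ and $\mb{b}$ carry opposite weight parities, exactly one of them satisfies this congruence, which determines $\mb{l}_{1}$ uniquely. Proposition \ref{prop1} can be used to restate this bookkeeping in terms of Hamming distance, but the one-bit separation of the colliding labels already supplies the opposite parities that the argument needs. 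The case $x_{1}\in\chi_{ol}$ is identical, with $\lambda_{ol},\lambda_{or}$ in place of $\lambda_{el},\lambda_{er}$ and the requirement $w(\mb{l}_{1})\equiv w'+1\pmod 2$.

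Combining the three steps, every $\mb{x}$ in the image of $\mu$ has a unique preimage, so $\mu$ is injective; equality of $|\mathcal{L}|$ and $|\mb{\chi}|$ then promotes injectivity to the claimed one-to-one correspondence.
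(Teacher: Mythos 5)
Your proposal is correct and takes essentially the same route as the paper's proof: the same decomposition into even- and odd-weight label subsets (read off from whether $x_{1}\in\chi_{el}$ or $x_{1}\in\chi_{ol}$), the same use of the bijectivity of $\lambda_{er}$ (resp.\ $\lambda_{or}$) to handle coordinates $i\geqslant 2$, and the same parity obstruction to separate the two labels that share the value $x_{1}$. The only difference is presentational: you construct an explicit inverse (a decoding map) and resolve the two-fold ambiguity at position one via the global weight-parity constraint, whereas the paper establishes injectivity directly by a two-case analysis of where two labels in $\mathcal{L}_{e}$ can differ, using the even Hamming distance from Proposition \ref{prop1} against the odd-distance collision rule of $\lambda_{el}$ --- logically the same argument, with the cardinality count promoting injectivity to bijectivity in both versions.
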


\begin{proof}
See Appendix \ref{apdx2}.

\end{proof}

\color{black}
\begin{prop}
\label{prop3}
In the proposed MD mapping function, the Hamming distance between the nearest neighbours in either of $\mb{\chi}_{e}$ or $\mb{\chi}_{o}$ cannot be larger than ($m+1$) bits. 
\end{prop}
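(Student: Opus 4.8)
The plan is to exploit the structure of $\mu$ to show that a Euclidean nearest-neighbour pair in $\mb{\chi}_{e}$ (the argument for $\mb{\chi}_{o}$ is identical) can differ in at most one two-dimensional coordinate, and then to bound the label Hamming distance of such a pair by a short case analysis. Let $\mb{x}, \mb{x}' \in \mb{\chi}_{e}$ be nearest neighbours with labels $\mb{l}, \mb{l}' \in \mathcal{L}_{e}$, and write $d_{\chi}$ for the minimum Euclidean distance of the two-dimensional constellation $\chi$.

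First I would establish that the minimum Euclidean distance within $\mb{\chi}_{e}$ equals $d_{\chi}$ and is attained only by pairs differing in a single coordinate. For the lower bound, two distinct points of $\mb{\chi}_{e}$ differ in at least one coordinate; a coordinate $i \geq 2$ contributes at least $d_{\chi}^{2}$ because $\lambda_{er}$ is a bijection onto $\chi$, and the first coordinate contributes at least the minimum distance of $\chi_{el}$, which is $\geq d_{\chi}$ since $\chi_{el} \subseteq \chi$. Hence a pair differing in two or more coordinates has distance $\geq \sqrt{2}\, d_{\chi} > d_{\chi}$. For attainability by a single-coordinate pair (using that $N \geq 2$, so a coordinate $j \geq 2$ exists), I would use the defining $2$-to-$1$ property of $\lambda_{el}$: flipping the first bit of $\mb{l}_{1}$ leaves $x_{1} = \lambda_{el}(\mb{l}_{1})$ unchanged, so one can realise any minimum-distance pair of $\chi$ on a single coordinate $j \geq 2$ while, if that pair has odd label Hamming distance, simultaneously flipping the first bit of $\mb{l}_{1}$ to restore the even parity required for membership in $\mathcal{L}_{e}$. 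The Euclidean displacement stays equal to $d_{\chi}$, so the nearest-neighbour distance is exactly $d_{\chi}$ and is realised only on one coordinate.

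Given that $\mb{x}$ and $\mb{x}'$ agree in all coordinates but one, I would bound $d_{H}(\mb{l}, \mb{l}')$ as follows. For every agreeing coordinate $i \geq 2$, bijectivity of $\lambda_{er}$ gives $\mb{l}_{i} = \mb{l}'_{i}$, while for the first coordinate the $2$-to-$1$ structure of $\lambda_{el}$ forces $x_{1} = x'_{1}$ to imply either $\mb{l}_{1} = \mb{l}'_{1}$ or that $\mb{l}_{1}$ and $\mb{l}'_{1}$ differ in exactly the first bit. If the differing coordinate is the first one, or a coordinate $j \geq 2$ with $\mb{l}_{1} = \mb{l}'_{1}$, then $\mb{l}$ and $\mb{l}'$ differ only inside one $m$-bit block, so $d_{H}(\mb{l}, \mb{l}') \leq m$. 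The only remaining case is a differing coordinate $j \geq 2$ together with a first-bit flip in $\mb{l}_{1}$, which yields $d_{H}(\mb{l}, \mb{l}') = 1 + d_{H}(\mb{l}_{j}, \mb{l}'_{j}) \leq 1 + m = m+1$, establishing the claim.

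The step I expect to be the main obstacle is the reduction to a single differing coordinate, i.e.\ proving that the nearest-neighbour distance in $\mb{\chi}_{e}$ really is $d_{\chi}$ and cannot be attained by multi-coordinate pairs; once this is in hand the Hamming bound is a routine case check. I would also invoke Proposition \ref{prop1} for the parity bookkeeping: since $\mb{l}, \mb{l}' \in \mathcal{L}_{e}$ their weight sum $W$ is even, so $D = d_{H}(\mb{l}, \mb{l}')$ is even; in the last case this forces $d_{H}(\mb{l}_{j}, \mb{l}'_{j})$ to be odd, hence at most $m$, with equality possible only for odd $m$ — which is exactly when the bound $m+1$ is attained.
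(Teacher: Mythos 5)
Your proof is correct, and it is the logical mirror image of the paper's: the paper proves the contrapositive (if $d_{H}(\mb{l},\mb{l}^{\prime}) \geq m+2$, then the number $j$ of differing $m$-bit blocks exceeds one, and a case analysis on $j=2$ versus $j\geqslant 3$ shows — via bijectivity of $\lambda_{er}$ and the fact that $\lambda_{el}$ merges only label pairs at Hamming distance one — that $\mb{x}$ and $\mb{x}^{\prime}$ differ in at least two coordinates, hence ``cannot be the nearest neighbours''), whereas you argue forward from the nearest-neighbour hypothesis to the bound $d_{H} \leqslant m+1$. The structural ingredients are identical in both directions: $\lambda_{er}$ bijective pins down every agreeing coordinate $i \geqslant 2$, and the $2$-to-$1$ first-bit-flip fibre structure of $\lambda_{el}$ contributes at most one extra bit. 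Where you genuinely go beyond the paper is the reduction step: the paper simply asserts that vectors differing in more than one coordinate cannot be nearest neighbours, leaving implicit that the minimum distance of $\mb{\chi}_{e}$ is actually attained by single-coordinate pairs. You prove this, and the attainability argument is the right one — the first-bit flip of $\mb{l}_{1}$ leaves $x_{1} = \lambda_{el}(\mb{l}_{1})$ invariant and repairs the parity needed for membership in $\mathcal{L}_{e}$, so any minimum-distance pair of $\chi$ can be realised on a coordinate $j \geqslant 2$ inside $\mb{\chi}_{e}$; combined with the $\sqrt{2}\,d_{\chi}$ lower bound for multi-coordinate pairs, this closes a gap the paper's proof does not address. (One cosmetic caveat: since the footnote defines the nearest neighbour of a given $\mb{x}$, strictly you want the per-point version of your reduction, i.e. that for \emph{each} $\mb{x}$ a single-coordinate candidate is always strictly closer than any multi-coordinate one; your parity-fix construction delivers exactly this, since it realises an arbitrary replacement symbol at any coordinate $j \geqslant 2$, so the fix is a one-line restatement rather than a new idea.) Your closing parity observation via Proposition \ref{prop1} — that $d_{H}(\mb{l},\mb{l}^{\prime})$ is even, so the bound $m+1$ is tight only for odd $m$ — is a correct refinement the paper does not state.
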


\begin{proof}
See Appendix \ref{apdx3}.

\end{proof}
\color{black}


\section{Development and optimization of cost functions to improve $\hat{\Phi}$}
\label{cost_func}
 As mentioned in Section \ref{harmonic}, maximizing the harmonic mean is a very complicated  task. In this section, we use the  $2$-D mappings in (\ref{mapping_func}), i.e., $\lambda_{el}$, $\lambda_{ol}$, $\lambda_{er}$, $\lambda_{or}$, to develop two cost functions and derive a lower bound for $\hat{\Phi}(\mu, \mb{\chi})$, i.e., $\Delta$. Then, we propose an algorithm to maximize $\Delta$ by optimizing the cost functions over the $2$-D mappings. As such, the achieved $2$-D mappings construct a MD mapping that provides a larger value of $\Delta$, which results in a larger value of  $\hat{\Phi}(\mu, \mb{\chi})$. 


\subsection{Development of Cost Functions}
\color{black} 
 In order to achieve a lower bound for $\hat{\Phi}(\mu, \mb{\chi})$, $\Delta$, we develop an upper bound for $\hat{\Phi}(\mu, \mb{\chi})^{-1}$. In particular, we first decompose $\hat{\Phi}^{-1}$ into two equal parts, i.e., $\Omega_e (\mu, \mb{\chi})$ and $\Omega_o(\mu, \mb{\chi})$, where in $\Omega_e$, $\mb{x}\in \mb{\chi}_e$ and $\hat{\mb{x}} \in \mb{\chi}_{o}$. Next, we derive an upper bound for $\Omega_e (\mu, \mb{\chi})$, i.e., $\Psi(\mu, \mb{\chi})$, and then, we decompose $\Psi$ into $\Psi_l(\mu, \mb{\chi})$ and $\Psi_r(\mu, \mb{\chi})$, where $\Psi_l$ uses only the first symbol in $\mb{x}$ and $\hat{\mb{x}}$, i.e., $x_1$ and  $\hat{x}_1$, and $\Psi_l$ uses the rest of symbols in $\mb{x}$ and $\hat{\mb{x}}$. As $\mb{x}\in \mb{\chi}_e$ and $\hat{\mb{x}} \in \mb{\chi}_{o}$, $x_1$ and $\hat{x}_1$ in $\Psi_l$ are obtained using the mapping functions $\lambda_{el}$ and $\lambda_{ol}$, respectively, while $x_i$ and $\hat{x}_i$ ($2 \leqslant i \leqslant N$) in $\Psi_r$ are obtained using $\lambda_{er}$ and $\lambda_{or}$, respectively. Thus, we develop two cost functions $\psi_{l}$ and $\psi_{r}$, where $\psi_{l}$ generates the same values as $\Psi_l$ by considering all the cases of using $\lambda_{el}$ and $\lambda_{ol}$ in $\Psi_l$, and $\psi_{r}$ generates the same vales as $\Psi_r$ by considering all the cases of using $\lambda_{er}$ and $\lambda_{or}$ in $\Psi_r$. Finally, we use $\psi_{l}$ and $\psi_{r}$ to develop $\Delta$. In what follows, we discuss these steps in more detail.

\color{black}
We use (\ref{Harmonic_fading}) to write

\begin{dmath}
\label{Harmonic_fading2}
\hat{\Phi}(\mu, \mb{\chi})^{-1} = \Omega_{e} (\mu, \mb{\chi}) + \Omega_{o}(\mu, \mb{\chi}),
\end{dmath}  
where

\begin{dmath}
\label{omega_e}
 \Omega_{e} (\mu, \mb{\chi}) =  \dfrac{1}{mN2^{mN}} 
\sum_{i=1}^{mN} \sum_{b=0}^{1} \sum_{\substack{\mb{x}\in \mb{\chi}_{b}^{i} \\ \mb{x}\in \mb{\chi}_{e}}} \dfrac{1}{\Vert {\mb x}- \hat{\mb x} \Vert^{2}}  
\end{dmath}
and

\begin{dmath}
\label{omega_o}
 \Omega_{o} (\mu, \mb{\chi}) = \dfrac{1}{mN2^{mN}} 
\sum_{i=1}^{mN} \sum_{b=0}^{1} \sum_{\substack{\mb{x}\in \mb{\chi}_{b}^{i} \\ \mb{x}\in \mb{\chi}_{o}}} \dfrac{1}{\Vert {\mb x}- \hat{\mb x} \Vert^{2}}.
\end{dmath} 
The sets $\mb{\chi}_{e}$ and $\mb{\chi}_{o}$ have the same cardinality. Moreover, when a given $\mb{x}$ is in $\mb{\chi}_{e}$ then the corresponding $\hat{\mb{x}}$ belongs to $\mb{\chi}_{o}$ and vice versa. Therefore, (\ref{omega_e}) and (\ref{omega_o}) use the same set of Euclidean distances, which results in $\Omega_{e}(\mu, \mb{\chi}) = \Omega_{o}(\mu, \mb{\chi})$. As a result, using (\ref{Harmonic_fading2}) we have 
\color{black} 
\begin{equation}
\label{Harmonic_omega}
\hat{\Phi}(\mu, \mb{\chi})^{-1} = 2\Omega_{e} (\mu, \mb{\chi}).
\end{equation} 
 Since $\Vert \mb{x}- \hat{\mb{x}} \Vert^{2} = \sum _{j = 1}^{N} \vert x_{j}- \hat{x}_{j} \vert^{2}$, then (\ref{omega_e}) can be rewritten as

\begin{dmath}
\label{Harmonic_omega2}
\Omega_{e} (\mu, \mb{\chi}) = \dfrac{1}{mN2^{mN}} \sum_{i=1}^{mN} \sum_{b=0}^{1} \sum_{\substack{\mb{x}\in \mb{\chi}_{b}^{i} \\ \mb{x}\in \mb{\chi}_{e}}} \dfrac{1}{\sum_{j = 1}^{N}\vert x_{j} - \hat{x}_{j} \vert^{2}}.\\ \nonumber
\end{dmath}

\begin{prop}
\label{prop4}
Let $\mb{y} = (y_{1}, y_{2}, \cdots, y_{N})$ be a vector of non-negative real numbers. Then, we have 

\begin{equation}
\frac{1}{\sum_{i = 1}^{N} y_{i} } \leqslant \frac{1}{N} \sum_{j = 1}^{N} \frac{1}{y_{j}}.
\label{Prop1_Inq1}
\end{equation}
\end{prop}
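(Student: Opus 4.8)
The plan is to prove this by a direct term-by-term comparison, which turns out to be more elementary than invoking a named inequality such as Cauchy--Schwarz or the arithmetic/harmonic mean inequality. The single observation that drives everything is that a sum of non-negative numbers dominates each of its own summands. I would first assume the entries are strictly positive, which is exactly the situation in our application since each $y_i$ represents a squared Euclidean distance between distinct signal points; the degenerate case is handled separately at the end.

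The key step is this: for every fixed index $j$, the non-negativity of the remaining entries gives $\sum_{i=1}^{N} y_i \geqslant y_j$, since the full sum differs from the single term $y_j$ only by the addition of non-negative quantities. Taking reciprocals of this positive inequality reverses it, so that $\frac{1}{\sum_{i=1}^{N} y_i} \leqslant \frac{1}{y_j}$ holds for each $j = 1, \dots, N$. I would then sum these $N$ inequalities over $j$. Since the left-hand side does not depend on $j$, the summation simply multiplies it by $N$, yielding
\begin{equation}
\frac{N}{\sum_{i=1}^{N} y_i} \leqslant \sum_{j=1}^{N} \frac{1}{y_j}.
\end{equation}
Dividing both sides by $N$ produces exactly (\ref{Prop1_Inq1}), completing the argument.

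Finally I would address the boundary case in which one or more $y_j$ equals zero, for which the reciprocals on the right-hand side are undefined. In that case the right-hand side is interpreted as $+\infty$ and the claimed bound holds trivially regardless of the finite left-hand side, so no generality is lost by the positivity assumption above. I do not anticipate any genuine obstacle here: the result is an elementary consequence of non-negativity, and the only mildly subtle point is recognizing that bounding the whole sum below by a single summand is all that is required, so that no convexity or Cauchy--Schwarz machinery is needed.
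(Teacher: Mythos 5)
Your proof is correct and follows essentially the same route as the paper's Appendix D: bound the reciprocal of the full sum by $1/y_{j}$ for each $j$, sum over $j$, and divide by $N$. Your explicit treatment of the degenerate case $y_{j}=0$ (interpreting the right-hand side as $+\infty$) is a small tidiness the paper omits, but it does not change the argument.
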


\begin{proof}
See Appendix \ref{apdx4}.

\end{proof}

Applying (\ref{Prop1_Inq1}) in (\ref{Harmonic_omega2}), we can write

\begin{equation}
\label{Harmonic_omega3}
\Omega_{e} (\mu, \mb{\chi}) \leqslant K\Psi(\mu, \mb{\chi})
\end{equation} where $K=\frac{1}{mN^22^{mN}}$ is a constant value and $\Psi(\mu, \mb{\chi})$ is defined as

\begin{equation}
\label{Harmonic_psi1}
\Psi(\mu, \mb{\chi}) = \sum_{i=1}^{mN} \sum_{b=0}^{1} \sum_{\substack{\mb{x}\in \mb{\chi}_{b}^{i} \\ \mb{x}\in \mb{\chi}_{e}}} \sum_{j = 1}^{N} \dfrac{1}{\vert x_{j} - \hat{x}_{j} \vert^{2}}.\\ \nonumber
\end{equation} We can decompose $\Psi(\mu, \mb{\chi})$ as

\begin{equation}
\label{Harmonic_psi2}
\Psi(\mu, \mb{\chi}) = \Psi_{l}(\mu, \mb{\chi}) + \Psi_{r}(\mu, \mb{\chi}),
\end{equation} where $\Psi_{l}(\mu, \mb{\chi})$ and $\Psi_{r}(\mu, \mb{\chi})$ are given by 

\begin{equation}
\label{Harmonic_psi_l}
\Psi_{l}(\mu, \mb{\chi}) = \sum_{i=1}^{mN} \sum_{b=0}^{1} \sum_{\substack{\mb{x}\in \mb{\chi}_{b}^{i} \\ \mb{x}\in \mb{\chi}_{e}}} \dfrac{1}{\vert x_{1} - \hat{x}_{1} \vert^{2}}
\end{equation}

and

\begin{equation}
\label{Harmonic_psi_r}
\Psi_{r}(\mu, \mb{\chi}) = \sum_{i=1}^{mN} \sum_{b=0}^{1} \sum_{\substack{\mb{x}\in \mb{\chi}_{b}^{i} \\ \mb{x}\in \mb{\chi}_{e}}} \sum_{j = 2}^{N} \dfrac{1}{\vert x_{j} - \hat{x}_{j} \vert^{2}}.
\end{equation} 

Let $\mb{l} = (l_{1}, l_{2}, \cdots, l_{mN})$ and $\hat{\mb{l}} = (\hat{l}_{1}, \hat{l}_{2}, \cdots, \hat{l}_{mN})$ are two $mN$-bit labels, which are different only in the $i^{th}$ bit position, and are mapped to  $\mb{x} = (x_{1}, \cdots, x_{N})$ and $\hat{\mb{x}} = (\hat{x}_{1}, \cdots, \hat{x}_{N})$, respectively. We define $\mb{l}_{i}$ and $\tilde{\mb{l}}_{i}$ respectively as the $i^{th}$ $m$-tuple bits of $\mb{l}$ and $\hat{\mb{l}}$, and rewrite $\mb{l} = (\mb{l}_{1}, \mb{l}_{2}, \cdots, \mb{l}_{mN})$ and $\tilde{\mb{l}} = (\tilde{\mb{l}}_{1}, \tilde{\mb{l}}_{2}, \cdots, \tilde{\mb{l}}_{mN})$. 
Then, $\Psi_{l}$ in (\ref{Harmonic_psi_l}) is equal to $\Psi'_{l}$, which is given by

\begin{equation}
\label{Harmonic_psi_l2}
\Psi'_{l}(\lambda_{el}, \lambda_{ol}, \mathcal{L}) = \sum_{i=1}^{mN} \sum_{b=0}^{1} \sum_{\substack{\mb{l}\in \mathcal{L}_{b}^{i} \\ \mb{l}\in \mathcal{L}_{e}}} \dfrac{1}{\vert \lambda_{el}(\mb{l}_{1}) - \lambda_{ol}(\hat{\mb{l}}_{1}) \vert^{2}},
\end{equation}
where $\mathcal{L}_{b}^{i} \in \mathcal{L}$ is the subset of labels with value $b$ in their $i^{th}$ bit position.  For a given $m$-bit sequence  $\mb{l}_{i}$, $\hat{\mb{l}}_{i}$ can take $(m+1)$ distinct $m$-bit sequences, where each one is the same as $\mb{l}_{i}$ or different from $\mb{l}_{i}$ only in one bit position. For example, if $m = 4$ and $\mb{l}_{i} = (0,0,0,0)$, $\hat{\mb{l}}$ can take either of the $5$ labels in $\lbrace (0,0,0,0),(0,0,0,1),(0,0,1,0),(0,1,0,0),(1,0,0,0)\rbrace$. Let $\mb{\alpha} = (\alpha_{1}, \cdots, \alpha_{m})$ and $\mb{\beta} = (\beta_{1}, \cdots, \beta_{m})$ be two binary sequences, where $\mb{\beta}$ has the Hamming distance of either zero or one bit from $\mb{\alpha}$. The set of $(m+1)$ possibilities for $\mb{\beta}$ is denoted by $\mathcal{B}$. Assume that for a given $i$, $\mb{l}_{i}=\mb{\alpha}$ and $\hat{\mb{l}}_{i}=\mb{\beta}$.  Then, $\Psi'_{l}$  in (\ref{Harmonic_psi_l2}) is equal to $\psi_{l}$, which is defined as 

\begin{equation}
\label{Harmonic_psi_l3}
\psi_{l}(\lambda_{el}, \lambda_{ol},\chi_{el}, \chi) = \sum_{\mb{\alpha}} \sum_{\mb{\beta} \in \mathcal{B}} \dfrac{a_{\mb{\alpha},\mb{\beta}}^{(l)}}{\vert \lambda_{el}(\mb{\alpha}) - \lambda_{ol}(\mb{\beta}) \vert^{2}},
\end{equation} where $a_{\mb{\alpha},\mb{\beta}}^{(l)}$ is computed as 

\begin{equation}
\label{a_l}
a_{\mb{\alpha},\mb{\beta}}^{(l)} = \sum_{i=1}^{mN} \sum_{b=0}^{1} \sum_{\substack{\mb{l}\in \mathcal{L}_{b}^{i} \\ \mb{l}\in \mathcal{L}_{e}}} I(\mb{l}_{1} = \mb{\alpha}, \hat{\mb{l}}_{1} = \mb{\beta}),
\end{equation} where $I(x)$ is an indicator function and is defined as 
\begin{equation}
I(x) =\left\{
\begin{array}{ll}
1, & \mbox{if}~ x ~ \mbox{is true},\\
0, & \mbox{otherwise}.
\end{array}\right.
\end{equation} 
Similarly, $\Psi_{r}$ in (\ref{Harmonic_psi_r}) is equal to $\Psi'_{r}$, which is given by 

\begin{equation}
\label{Harmonic_psi_r2}
\Psi'_{r}(\lambda_{er}, \lambda_{or}, \mathcal{L}) = \sum_{i=1}^{mN} \sum_{b=0}^{1} \sum_{\substack{\mb{l}\in \mathcal{L}_{b}^{i} \\ \mb{l}\in \mathcal{L}_{e}}} \sum_{j = 2}^{N} \dfrac{1}{\vert \lambda_{er}(\mb{l}_{j}) - \lambda_{or}(\hat{\mb{l}}_{j}) \vert^{2}}.
\end{equation} The $m$-bit elements $\mb{l}_{i}$ in $\mb{l} = (\mb{l}_{1}, \mb{l}_{2}, \cdots, \mb{l}_{N})$ are independent from one another for all values of $i$. Then, $\Psi'_{r}$ in (\ref{Harmonic_psi_r2}) is equal to   $\psi_{r}$, which is defined as
\begin{equation}
\label{Harmonic_psi_r3}
\psi_{r}(\lambda_{er}, \lambda_{or}, \chi) = \sum_{i=1}^{mN} \sum_{b=0}^{1} \sum_{\substack{\mb{l}\in \mathcal{L}_{b}^{i} \\ \mb{l}\in \mathcal{L}_{e}}} \dfrac{N-1}{\vert \lambda_{er}(\mb{l}_{2}) - \lambda_{or}(\hat{\mb{l}}_{2}) \vert^{2}},
\end{equation} 
and can be rewritten as

\begin{equation}
\label{Harmonic_psi_r3}
\psi_{r}(\lambda_{er}, \lambda_{or}, \chi) = \sum_{\mb{\alpha}} \sum_{\mb{\beta} \in \mathcal{B}} \dfrac{(N-1)a_{\mb{\alpha},\mb{\beta}}^{(r)} }{\vert \lambda_{er}(\mb{\alpha}) - \lambda_{or}(\mb{\beta}) \vert^{2}},
\end{equation} where $a_{\mb{\alpha},\mb{\beta}}^{(r)}$ is computed as 

\begin{equation}
\label{a_l}
a_{\mb{\alpha},\mb{\beta}}^{(r)} = \sum_{i=1}^{mN} \sum_{b=0}^{1} \sum_{\substack{\mb{l}\in \mathcal{L}_{b}^{i} \\ \mb{l}\in \mathcal{L}_{e}}} I(\mb{l}_{2} = \mb{\alpha}, \hat{\mb{l}}_{2} = \mb{\beta}).
\end{equation}

Using (\ref{Harmonic_omega}), (\ref{Harmonic_omega3}), and (\ref{Harmonic_psi2}), a lower bound of  $\hat{\Phi}(\mu, \mb{\chi})$ can be derived as follows

\begin{eqnarray}
\label{lowr_bnd1}
& \hat{\Phi}^{-1} (\mu, \mb{\chi}) \leqslant 2K \left(\Psi_{l}(\mu, \mb{\chi}) + \Psi_{r}(\mu, \mb{\chi})\right) \\ \nonumber
& \Rightarrow \Delta \leqslant \hat{\Phi} (\mu, \mb{\chi}),
\end{eqnarray}
where $\Delta$ is given by
\begin{eqnarray}
\label{lowr_bnd12}
\Delta = \dfrac{1}{2K(\Psi_{l}(\mu, \mb{\chi}) + \Psi_{r}(\mu, \mb{\chi}))}.
\end{eqnarray} Because $\Psi_{l}(\mu, \mb{\chi}) = \psi_{l}(\lambda_{el}, \lambda_{ol}, \chi_{el}, \chi)$  and $\Psi_{r}(\mu, \mb{\chi}) = \psi_{r}(\lambda_{er}, \lambda_{or}, \chi)$, we rewrite $\Delta$ as

\begin{eqnarray}
\label{lowr_bnd12}
\Delta = \dfrac{1}{2K(\psi_{l}(\lambda_{el}, \lambda_{ol}, \chi_{el}, \chi) + \psi_{r}(\lambda_{er}, \lambda_{or}, \chi))}.
\end{eqnarray} Note that (\ref{lowr_bnd12}) operates in $2$-D signal space rather than MD signal space, and as a result, optimization is  much simpler.

Our objective is to maximize $\Delta$ and then to calculate the corresponding $\hat{\Phi} (\mu, \mb{\chi})$. Since $\psi_{l}(\lambda_{el}, \lambda_{ol}, \chi_{el}, \chi) $ and $ \psi_{r}(\lambda_{er}, \lambda_{or}, \chi)$ are independent from each other, then the maximum value of $\Delta$, $\Delta_{max}$, is given by

\begin{dmath}
\label{delta_max}
\Delta_{max} = \dfrac{1}{2K}\left[\min_{\lambda_{el}, \lambda_{ol},\chi_{el}} \psi_{l}(\lambda_{el}, \lambda_{ol}, \chi_{el}, \chi) + \min_{\lambda_{er}, \lambda_{or}} \psi_{r}(\lambda_{er}, \lambda_{or}, \chi)\right]^{-1}.
\end{dmath}

\subsection{Minimization of Cost Functions}
\label{four_b}
 The minimization of the cost functions $\psi_{l}$ and $\psi_{r}$ in (\ref{delta_max}) can be done using the BSA \cite{BSA}. Note that the minimization of $\psi_{r}$ is simpler than that of $\psi_{l}$ because $\psi_{r}$ deals with fewer effective arguments. Thus, we first optimize $\psi_{r}$, and then, use the obtained results to simplify the minimization of $\psi_{l}$. 
 \subsubsection{Minimization of $\psi_{r}$}
 To minimize $\psi_{r}$, two random mappings are initially  considered as $\lambda_{er}$ and $\lambda_{or}$. Then, the BSA is used to minimize $\psi_{r}$ by modifying $\lambda_{er}$ and $\lambda_{or}$.  In fact in $\psi_{r}$, the cost value for a given symbol in $\lambda_{er}$ is computed by using $(m+1)$ corresponding symbols from $\lambda_{or}$. As a result,  simultaneously modifying $\lambda_{er}$ and $\lambda_{or}$  can make the optimization complex. Therefore, our approach is to minimize $\psi_{r}$ by alternatingly modifying each of $\lambda_{er}$ and $\lambda_{or}$. 
In other words, we use the BSA to decrease $\psi_{r}$ by modify $\lambda_{er}$. After a given number of iterations, $\lambda_{er}$ and $\lambda_{or}$ are exchanged. Again, the BSA is used to decrease $\psi_{r}$ by modifying the new $\lambda_{er}$. 
This procedure is repeated up to a given number of iterations. 

 \subsubsection{Minimization of $\psi_{l}$}
 In addition to $\lambda_{el}$ and $\lambda_{ol}$, $\chi_{el}$ is another effective argument in computing $\psi_{l}$. As there is no constraint on $\chi_{el}$, it is a complex process to minimize $\psi_{l}$. To simplify the optimization process, $\chi_{el}$ is constrained  to involve only the symbols whose labels in the obtained $\lambda_{er}$ take binary value $b$ in a given bit position. In this paper, we assume that $\chi_{el}$ involves the symbols whose labels in $\lambda_{er}$ take the value zero in the first bit-position. The functions $\psi_{l}$ and $\psi_{r}$ are computed by considering the similar Euclidean distances between two-dimensional symbols. As a result, there is a potential advantage in applying the above mentioned constraint on $\chi_{el}$ because it will be easier to find a suitable $\lambda_{el}$ corresponding to a given $\lambda_{ol}$. 
After determining  $\chi_{el}$ and $\chi_{ol}$, two random mappings are generated as $\lambda_{el}$ and $\lambda_{ol}$. 
Again the BSA is applied to minimize $\psi_{l}$ by modifying $\lambda_{el}$. Then, $\lambda_{el}$ is exchanged by $\lambda_{ol}$ and the BSA minimizes $\psi_{l}$ by modifying the new $\lambda_{el}$. This procedure is repeated up to a given number of iterations. \color{black}

By executing the proposed  algorithm for a certain number of iterations, a local maximum value is calculated using (\ref{delta_max}). The search algorithm is executed several times and each time the corresponding value for $\hat{\Phi}(\mu, \mb{\chi})$ is calculated. Finally, the modulations corresponding to the maximum obtained $\hat{\Phi}(\mu, \mb{\chi})$ are chosen. Fig. \ref{flwchrt} illustrates the flowchart of the proposed algorithm.
Numerical results confirm that the proposed algorithm generates mappings with significantly large values of $\hat{\Phi}(\mu, \mb{\chi})$. As a result, the obtained mappings would also improve the  error performance of BICM-ID systems in the high SNR region over Rayleigh fading channels.

\subsection{Simplified BSA}
\color{black}  We use the simplified BSA in our algorithm to reduce the search complexity. The BSA calculates a cost function for each symbol in an initial random mapping, and then, lists the symbols in descending order in terms of cost value. Next, the label of the symbol with the highest cost value is switched with the label of another symbol such that the total cost is reduced as much as possible. After each switch, the BSA again lists the symbols in descending cost value and repeats the switching process. 
\begin{figure}[t]
\center
\includegraphics[width= 0.51\columnwidth, viewport=50.40mm 50mm 144.25mm 242.00mm]{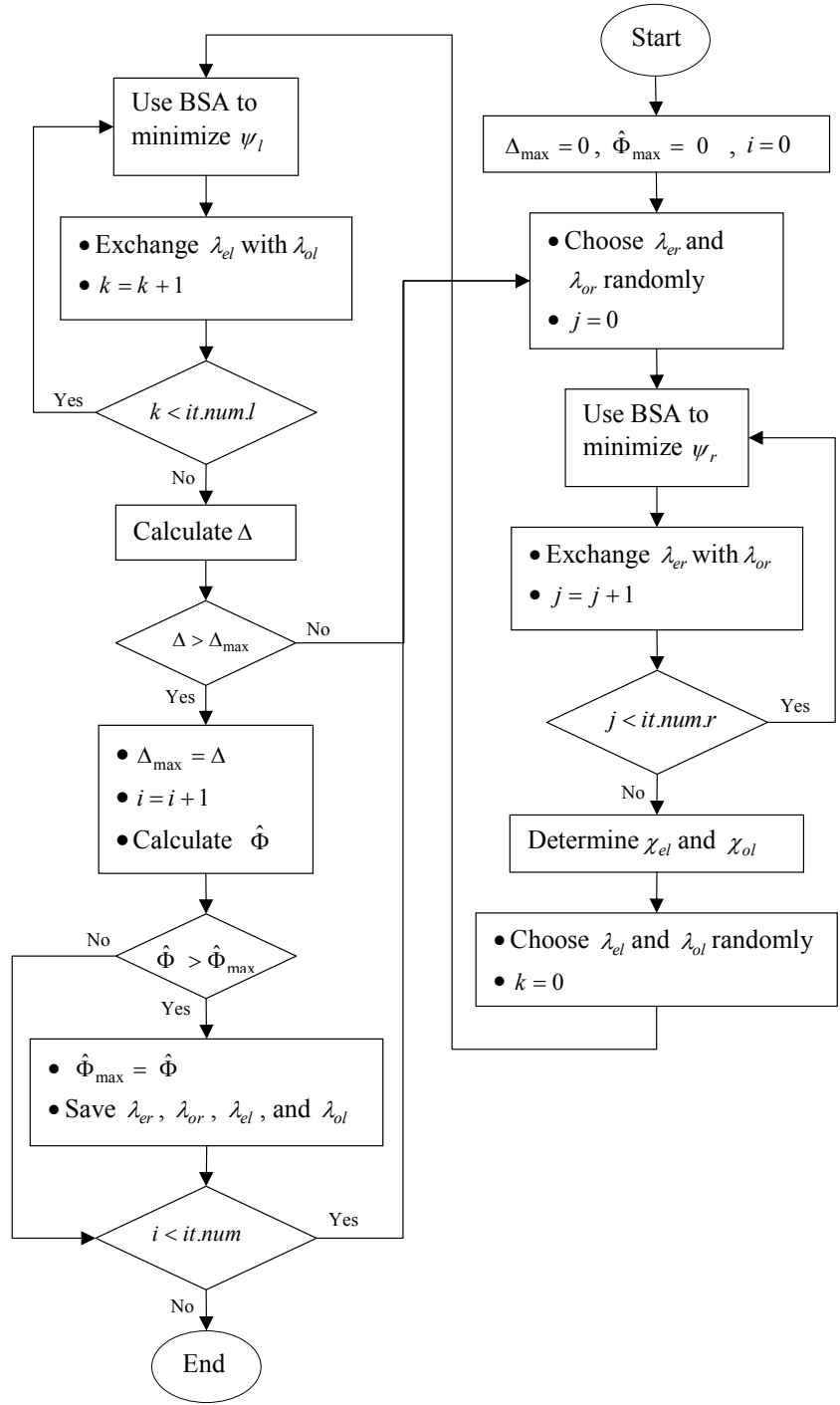}
\caption{Flowchart of the proposed algorithm (\textit{it.num.r}, \textit{it.num.l}, and \textit{it.num} represent the number of iterations for different loops).}
\label{flwchrt}
\end{figure}
It is important to note that each switch in the subsequent rounds of the BSA affects the cost value of only a limited number of symbols in the mapping. In particular, in maximizing $\hat{\Phi}(\mu, \mb{\chi})$ for a given modulation using the BSA, when the label of symbol $s_{i}$, i.e., $\mb{l}_{i}$, is switched by label of symbol $s_{j}$, i.e., $\mb{l}_{j}$, the cost value  changes only for $s_{i}$ and $s_{j}$ and for the symbols whose labels are different from $\mb{l}_{i}$ or $\mb{l}_{j}$ only in one bit position. For example, for a $2^{m}$-ary modulation, the number of affected symbols in each switch in a subsequent round is at most ($m+2$). Therefore, in the subsequent rounds of the simplified BSA, we calculate the cost function only for the symbols that are affected by the most recent switch. This modification makes the BSA much simpler while resulting in the exactly same results. \color{black}


\color{black}

\begin{table*}[b]
\caption{Proposed $\lambda_{er}$, $\lambda_{or}$, $\lambda_{el}$,  and $\lambda_{ol}$ for $8$-ary modulations.}
\centering
\resizebox{0.73\columnwidth}{!}{%
\begin{tabular}{c|cccccccc|cccccccc|}
\cline{2-17}
 &\multicolumn{8}{c|}{$8$-PSK} & \multicolumn{8}{c|}{Optimum $8$-QAM}\\
\hline
\multicolumn{1}{|c|}{$\lambda_{er}$} & [2 & 7 & 6 & 5 & 4 & 1 & 0 & 3] & [1 & 3 & 5 & 7 & 2 & 0 & 6 & 4] \\ \hline
\multicolumn{1}{|c|}{$\lambda_{or}$} & [4 & 1 & 0 & 3 & 2 & 7 & 6 & 5] & [6 & 4 & 2 & 0 & 5 & 7 & 1 & 3]\\ \hline

\multicolumn{1}{|c|}{$\lambda_{el}$} & [(2, & 6) & (1, & 5) & (0, & 4) & (3, & 7)] & [(1, & 5) & (3, & 7) & (2, & 6) & (0, & 4)]\\ \hline
\multicolumn{1}{|c|}{$\lambda_{ol}$} & [(1, & 5) & (0, & 4) & (3, & 7) & (2, & 6)]  & [(2, & 6) & (3, & 7) & (0, & 4) & (1, & 5)]\\\hline
\end{tabular}
\label{MD_8ary_Map}
}
\end{table*}
\section{Numerical Results and Discussion}
\label{num_result}

\begin{figure}[t]
\centering
\includegraphics[width= 0.48\columnwidth, viewport=25.40mm 170.13mm 159.51mm 235.00mm]{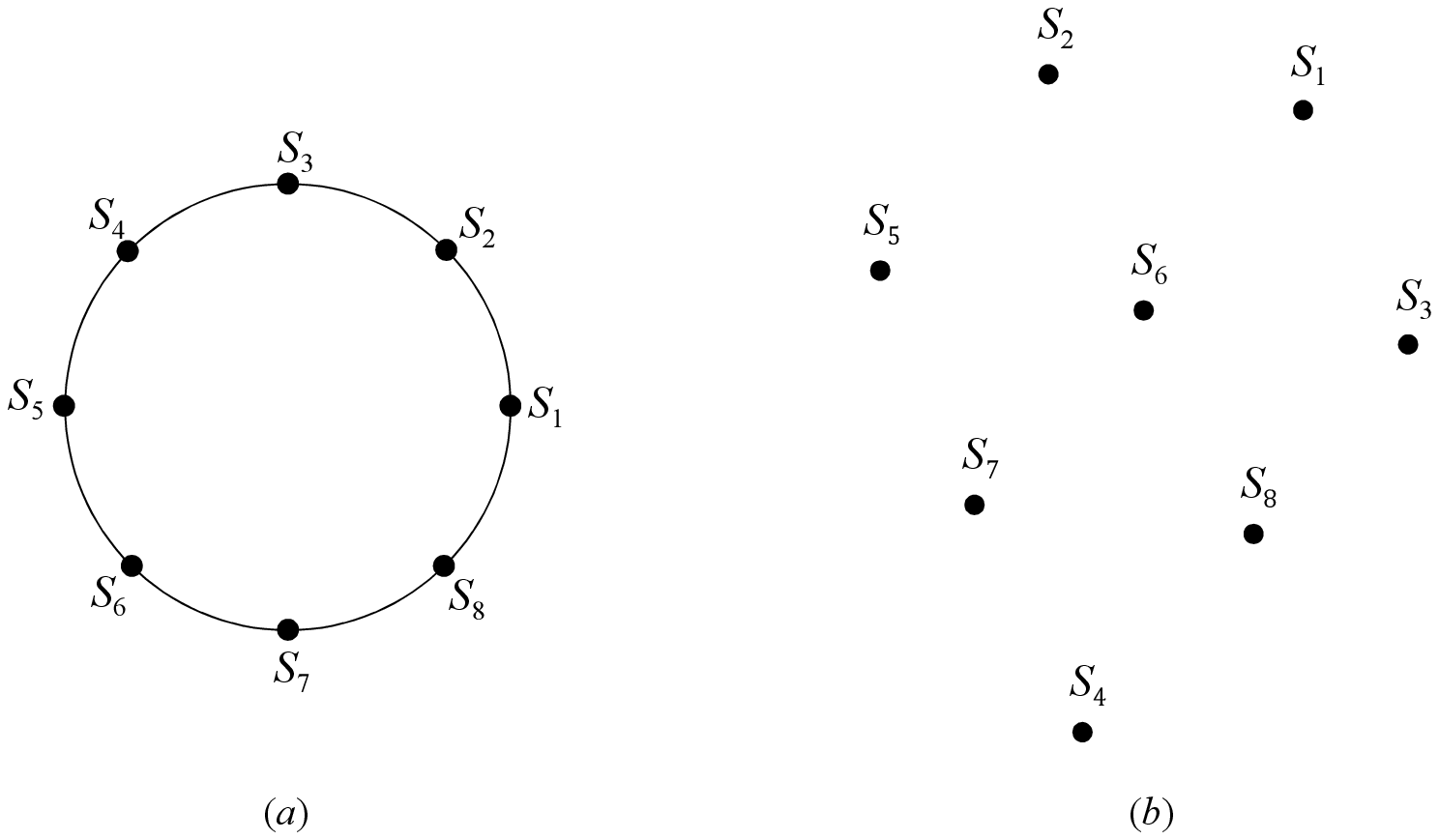}
\caption{(a). Symbol's arrangement in $8$-ary constellations: (a). $8$-PSK and (b). the optimum $8$-QAM \cite{optimal_8QAM}.}
\label{8ary_cons}
\end{figure}

This section provides resulting mappings and selected numerical results to illustrate the performance and advantage of our proposed MD mappings for BICM-ID. 

\subsection{Resulting MD Mappings}
\color{black} Our proposed algorithm  is used to obtain MD mappings of different modulations such as $8$-PSK, the optimum $8$-QAM \cite{optimal_8QAM}, and $2^{m}$-QAM for $m= 4, 5, \cdots, 10$. \color{black} Tables \ref{MD_8ary_Map}-\ref{MD_16QAM_Map} show the resulting $2$-D mappings, i.e., $\lambda_{er}$, $\lambda_{or}$, $\lambda_{el}$, and $\lambda_{ol}$, in decimal format for $8$-ary modulations and $16$-QAM. The results for larger modulations are repoertd in Apendix \ref{app_mappings}. In these tables, the decimal labels are ordered according to the symbol order in the corresponding constellations. For  $8$-PSK and the optimum $8$-QAM, the considered symbol order is shown in Fig. \ref{8ary_cons}. For square QAMs, it is assumed that the symbol order starts from the top left corner in the constellation and increases from top to bottom and from left to right (see Fig. \ref{16QAM}.(a) as an example for $16$-QAM). For cross QAM constellations, such as $32$-QAM, we consider the symbol order used in \cite{Cross_QAM}.  
In these tables, the resulting $2$-D mappings for higher order modulations are indicated in multiple rows. For  example in Table \ref{MD_64QAM_Map}, $\lambda_{er}$ for $64$-QAM is indicated in two rows, where the first element in the second row is the label of the $33^{rd}$ symbol in the $64$-QAM constellation.  In addition, two labels in the $i^{th}$ parentheses in  $\lambda_{el}$ and $\lambda_{ol}$ in each table belong to the $i^{th}$ symbol in $\chi_{el}$ and $\chi_{ol}$, respectively. For example, Fig. \ref{16QAM}.(b), (c), and (d) illustrate the $16$-QAM mappings reported in Table \ref{MD_16QAM_Map}. As mentioned in Section \ref{four_b},  $\chi_{el}$ involves the symbols whose binary labels in $\lambda_{er}$ take the value zero at the first bit position. In other words, $\chi_{el}$ is constructed by the symbols whose decimal label in $\lambda_{er}$ is smaller than $\frac{M}{2}$. As a result, for $16$-QAM, $\chi_{el} = \lbrace S_{1}, S_{2}, S_{5}, S_{6}, S_{9}, S_{10}, S_{13}, S_{14} \rbrace$, where $\chi_{el}$ is indicated by unshaded symbols in Fig. \ref{16QAM}.(d). The remaining $16$-QAM symbols belong to $\chi_{ol}$, which are shaded in Fig. \ref{16QAM}.(d). Example \ref{exmpl1} clarifies how to use $\lambda_{el}$, $\lambda_{ol}$, $\lambda_{er}$, and $\lambda_{or}$ to construct the proposed MD mapping of $16$-QAM.  

\begin{figure}[t]
\centering
\includegraphics[width= 0.8\columnwidth, viewport=20mm 215mm 165mm 254.00mm]{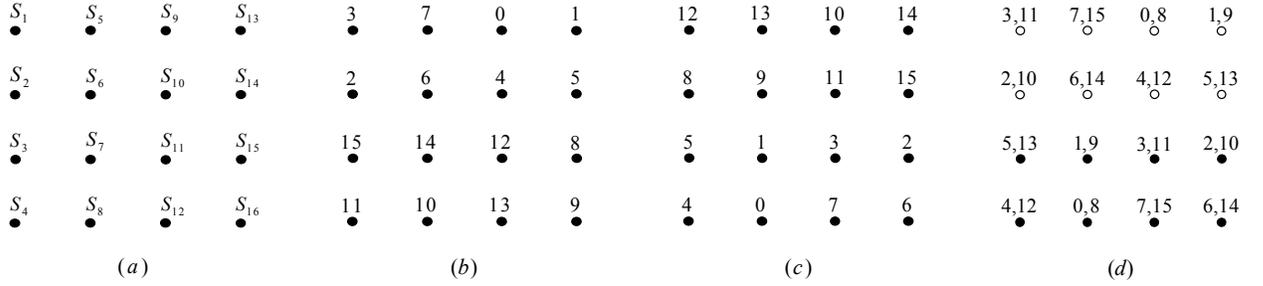}
\caption{(a). Symbol's arrangement in $16$-QAM, and achieved $16$-QAM mappings in decimal format: (b). $\lambda_{er}$, (c). $\lambda_{or}$, and (d). $\lambda_{el}$ (the unshaded symbols), $\lambda_{ol}$ (the shaded symbols).}
\label{16QAM}
\end{figure}

\begin{table}
\caption{Proposed $\lambda_{er}$, $\lambda_{or}$, $\lambda_{el}$,  and $\lambda_{ol}$ for $16$-QAM.}
\centering
\resizebox{0.8\columnwidth}{!}{%
\begin{tabular}{|c||cccccccccccccccc|}
\hline
$\lambda_{er}$ & [3 & 2 & 15 & 11 & 7 & 6 & 14 & 10 & 0 & 4 & 12 & 13 & 1 & 5 & 8 & 9] \\ \hline
$\lambda_{or}$ & [12 & 8 & 5 & 4 & 13 & 9 & 1 & 0 & 10 & 11 & 3 & 7 & 14 & 15 & 2 & 6] \\ \hline

$\lambda_{el}$ & [(3, & 11) & (2, & 10) & (7, & 15) & (6, & 14) & (0, & 8) & (4, & 12) & (1, & 9) & (5, & 13)]\\ \hline
$\lambda_{ol}$ & [(5, & 13) & (4, & 12) & (1, & 9) & (0, & 8) & (3, & 11) & (7, & 15) & (2, & 10) & (6, & 14)]\\\hline
\end{tabular}
\label{MD_16QAM_Map}
}
\end{table}

\begin{example}
\label{exmpl1}
\normalfont
 In the proposed MD mapping method, let us set $m=4$ ($16$-QAM), $N=3$ and $\mb{l} = (0,1,1,0,1,1,1,1,0,1,1,1)$. The label $\mb{l}$ is considered as a sequence of three $4$-bit labels, i.e., $\mb{l}= (\mb{l}_{1},\mb{l}_{2},\mb{l}_{3})$, where $\mb{l}_{1} = (0,1,1,0)$, $\mb{l}_{2} = (1,1,1,1)$, and $\mb{l}_{3} = (0,1,1,1)$. The mapping rule in (\ref{mapping_func}) is used to map $\mb{l}= (\mb{l}_{1},\mb{l}_{2},\mb{l}_{3})$ to signal point $\mb{x} = (x_{1}, x_{2}, x_{3})$, as follows. The Hamming weight of $\mb{l}$ is odd, i.e., $\mb{l} \in \mathcal{L}_{o}$, thus  $x_{1} = \lambda_{ol}(\mb{l}_{1})$, $x_{2} = \lambda_{or}(\mb{l}_{2})$, and $x_{3} = \lambda_{or}(\mb{l}_{3})$. The decimal format of $\mb{l}_{1}$, $\mb{l}_{2}$, and $\mb{l}_{3}$ are $6$, $15$, and $7$, respectively. In Fig. \ref{16QAM}(d), it can be observed that among the shaded symbols that $\lambda_{ol}$ operates on, symbol $S_{16}$ is mapped by decimal label $6$. As a result, $x_{1} = S_{16}$. Considering the mapping function $\lambda_{or}$ indicated in Fig. \ref{16QAM}(c) we also have $x_{2} = \lambda_{or}(15) = S_{14}$ and $x_{3} = \lambda_{or}(7) = S_{12}$. Consequently, $\mb{l}$ is mapped to $\mb{x} = (S_{16}, S_{14}, S_{12})$.

\end{example}

\subsection{Performance Comparison}
\color{black}The proposed mappings are compared to the MD mappings obtained using the state of the art methods in the literature such as the optimum mapping method in \cite{MD-8PSK-Ha}, the BSA, random mapping, and the MD mapping method in \cite{MD_16_64QAM}. To assess different mappings, we first compare the value of $\Phi(\mu,\mb{\chi})$ and $\hat{\Phi}(\mu,\mb{\chi})$ offered by the mappings. Then, we use the BER curve to compare the BICM-ID error performance in the low SNR region when using different mappings. Finally, we use an analytical bound on the error-floor to evaluate the system's error performance in the high SNR region. 
In our simulations, we consider a rate-$1/2$ convolutional code with the generator polynomial of $(13, 15)_{8}$.  An interleaver  length of about $10000$ bits is used. 
All BER curves are presented with seven iterations
and all gains reported in this section are  measured at a BER of $10^{-6}$. Also, the error-floor bounds have been plotted using the Gauss-Chebyshev method in \cite{Chindapol_16QAM}. \color{black} It is worth noting that achieving these error-floor bounds can be challenging because they happen at a very small value of BER. \color{black}
 
According to the modulation order, we discuss our results for three classes of modulations: low, medium, and high order modulations. In particular, we discuss $8$-ary modulations (for the low order), $16$-
and $32$-QAM (for the medium order), and M-QAM ($M = 64$, $128$ and $256$) for the higher order modulations.

\subsubsection{MD mapping of low order modulations}

\begin{table}[t]
\caption{Comparison of $\Phi(\mu,\mb{\chi})$ and $\hat{\Phi}(\mu,\mb{\chi})$ for different mappings.}
\centering
\resizebox{0.56\columnwidth}{!}{%
\begin{tabular}{|c||c|c||c|c|}
\hline
\multirow{2}{*}{Mapping}&\multicolumn{2}{c||}{$N = 2$}&\multicolumn{2}{c|}{$N = 3$}\\
\cline{2-5}
  & $\Phi(\mu,\mb{\chi})$ & $\hat{\Phi}(\mu,\mb{\chi})$ & $\Phi(\mu,\mb{\chi})$ & $\hat{\Phi}(\mu,\mb{\chi})$ \\
\hline\hline

 Optimum MD $8$-PSK \cite{MD-8PSK-Ha}  & 0.3112 &   3.3529 & 0.2119 & 3.5454 \\ \hline
 
Proposed MD $8$-PSK &  0.3112 &   3.3529 & 0.2119 & 3.5454 \\ \hline

BSA MD optimum $8$-QAM & 0.4506 & 2.7838 & 0.3002 & 2.8515 \\ \thickhline

Proposed MD optimum $8$-QAM  &  0.4552 & 2.9697 & 0.3041 & 3.1463 \\ \hline
\end{tabular}
\label{Parameters_8ary}
}
\end{table}

\begin{figure}[h]
\centering
\includegraphics[width= 0.65\columnwidth,viewport= 20mm 0mm 325mm 225mm,clip]{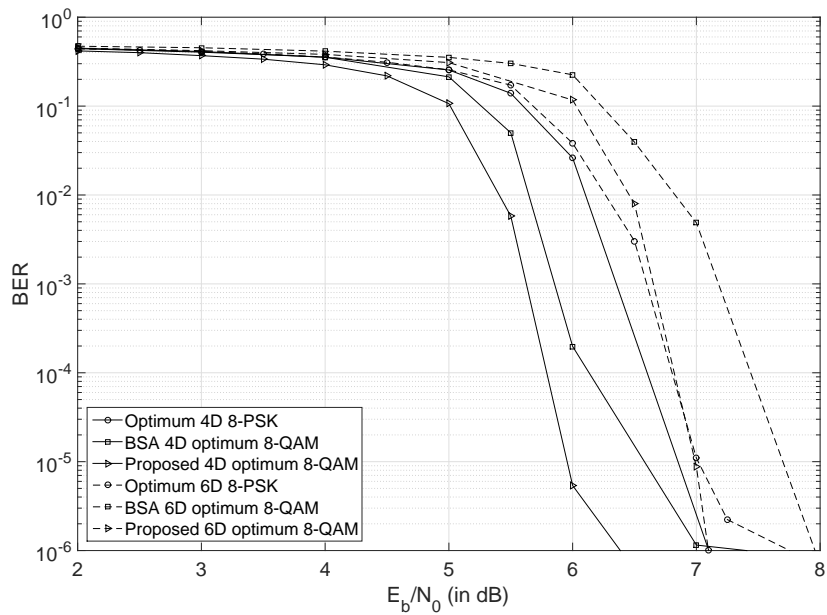}
\caption{BER performance of BICM-ID with $4$-D and $6$-D $8$-ary modulations over Rayleigh fading channels.}
\label{BER_4D_6D_8ary}
\end{figure}

Table \ref{Parameters_8ary} compares the values of the harmonic mean before and after feedback, i.e., $\Phi(\mu,\mb{\chi})$ and $\hat{\Phi}(\mu,\mb{\chi})$, for different $2N$-D ($N=2, 3$) mappings of $8$-ary modulations. As this table shows, for $8$-PSK, our proposed MD mapping provides the same values as those of the optimum MD mapping proposed in \cite{MD-8PSK-Ha}. \color{black} This clearly shows our proposed algorithm's efficiency.  We have also used our algorithm to obtain MD mappings of the optimum $8$-QAM constellation introduced in \cite{optimal_8QAM}. \color{black} As shown in Table \ref{Parameters_8ary}, for the optimum $8$-QAM,  our resulting mappings improve the values of $\Phi(\mu,\mb{\chi})$ and $\hat{\Phi}(\mu,\mb{\chi})$ in comparison with the mappings obtained using the state of the art BSA. Therefore, it is expected that for the optimum $8$-QAM, the proposed mappings outperform the BSA mappings in both the low and high SNR regions. This is confirmed by the plotted simulation results for the BER and error-floor bounds of BICM-ID in Fig. \ref{BER_4D_6D_8ary} and Fig. \ref{EF_4D_6D_8ary}, respectively. As shown in Fig. \ref{BER_4D_6D_8ary}, for the $4$-D and $6$-D optimum $8$-QAM, our proposed mappings outperform their BSA counterparts by $1$ and $0.85$ dB, respectively. The gain over the optimum $4$-D and $6$-D mapping of $8$-PSK is $0.7$ and $0.65$, respectively. This is  because in addition to a large value of $\hat{\Phi}(\mu,\mb{\chi})$, the proposed MD mapping of the optimum $8$-QAM significantly improves the value of $\Phi(\mu,\mb{\chi})$ compared to that of the optimum MD $8$-PSK. Note that for the case of MD $8$-PSK, our resulting mapping is the same as the optimum mapping in \cite{MD-8PSK-Ha}. Thus, only the  optimum mapping is used for comparison in BER and error-floor plots. As shown in Fig. \ref{EF_4D_6D_8ary}, the proposed $4$-D and $6$-D mappings of the optimum $8$-QAM outperform the corresponding BSA mappings by $0.3$ and $0.45$ dB, respectively. Moreover in this figure, the optimum MD mapping of $8$-PSK offers a lower error-floor due to the nature of the $8$-PSK constellation, which can offer higher values of $\hat{\Phi}(\mu,\mb{\chi})$.

\begin{figure}[t]
\centering
\includegraphics[width= 0.68\columnwidth,viewport= 10mm 0mm 325mm 225mm,clip]{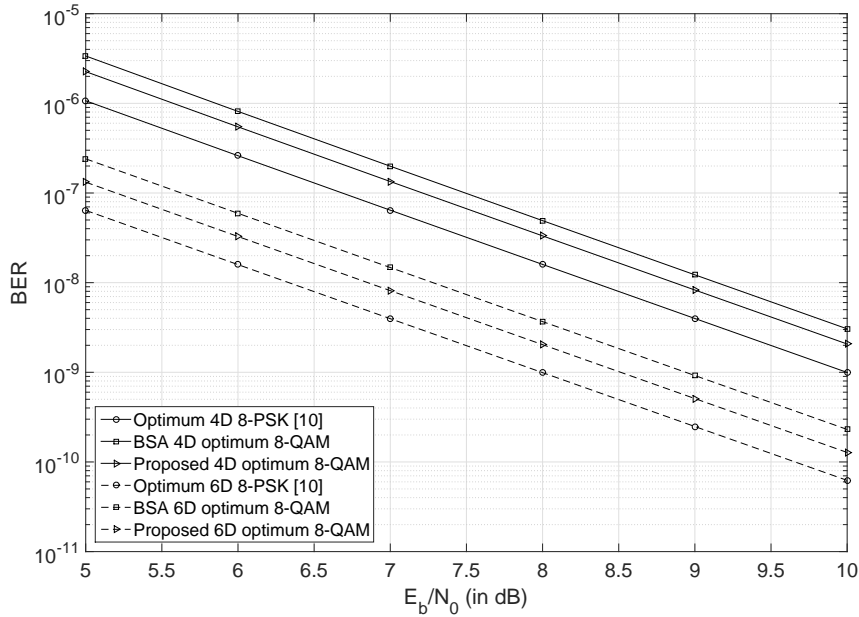}
\caption{Error-floor bounds of BER for BICM-ID with $4$-D and $6$-D $8$-ary modulations over Rayleigh fading channels.}
\label{EF_4D_6D_8ary}
\end{figure}

\color{black}

\subsubsection{MD mapping of medium order modulations}

\begin{table}[b]
\caption{Comparison of $\Phi(\mu,\mb{\chi})$ and $\hat{\Phi}(\mu,\mb{\chi})$ for different mappings.}
\centering
\resizebox{0.5\columnwidth}{!}{%
\begin{tabular}{|c|c||c|c|}
\hline
\multicolumn{2}{|c||}{$4$D Mapping} & $\Phi(\mu,\mb{\chi})$ & $\hat{\Phi}(\mu,\mb{\chi})$ \\
\hline\hline
\multirow{4}{*}{$16$-QAM} & BSA mapping &  0.2026 & 2.5814 \\ \cline{2-4}
&Random mapping & 0.2012   & 1.4350   \\ \cline{2-4}
&Mapping in \cite{MD_16_64QAM} & 0.2151  & 2.8491 \\ \cline{2-4}
&Proposed mapping & 0.2151  & 3.1622  \\ \thickhline
\multirow{3}{*}{$32$-QAM} & BSA mapping &  0.1027 & 2.8574 \\ \cline{2-4}
&Random mapping & 0.1008  & 1.2567 \\ \cline{2-4}
&Proposed mapping & 0.1117  &  3.1677  \\ \hline
\end{tabular}
\label{Parameters_16_32QAM}
}
\end{table}

\begin{figure}[h]
\centering
\includegraphics[width= 0.63\columnwidth,viewport= 20mm 0mm 335mm 225mm,clip]{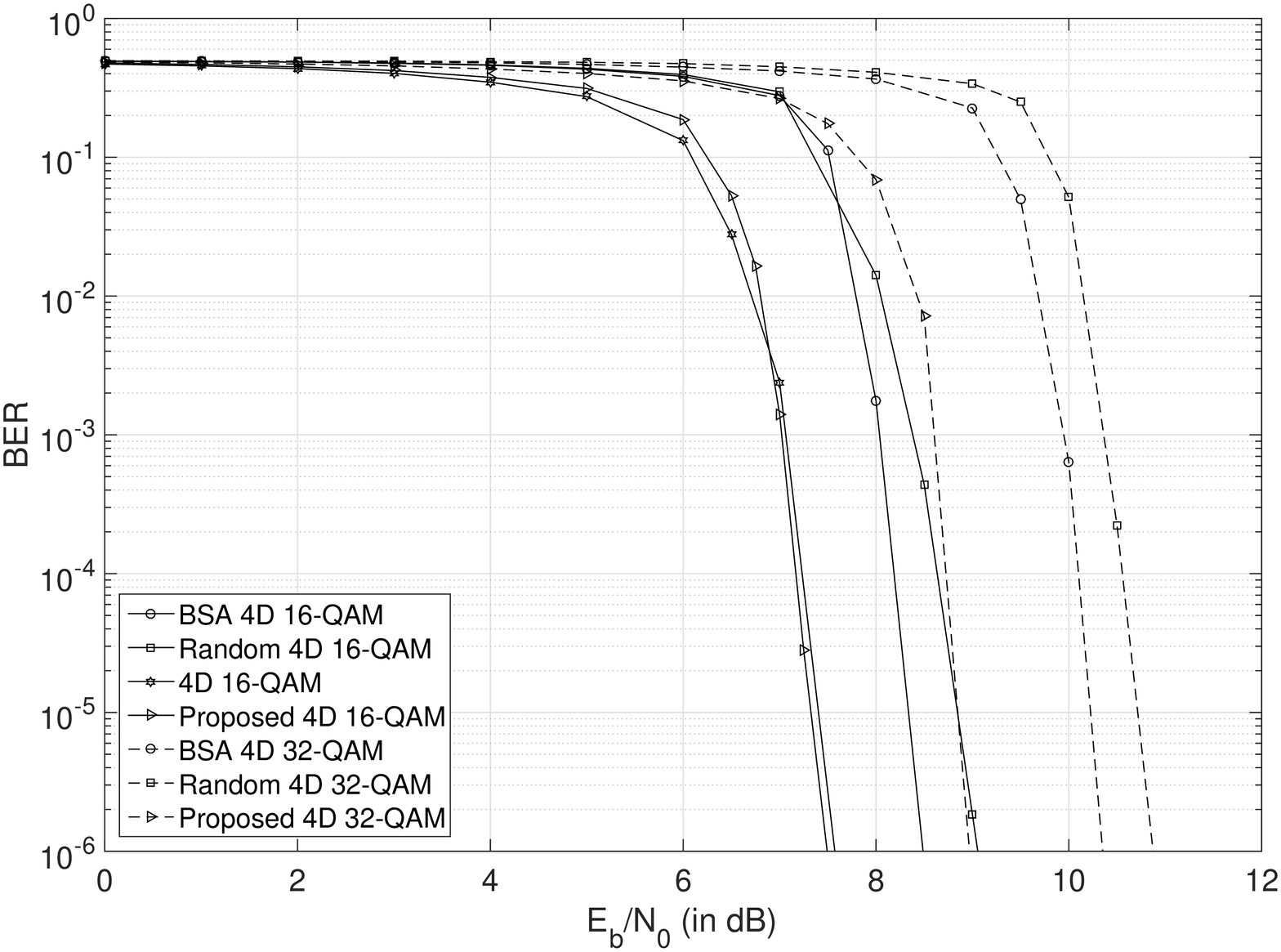}
\caption{BER performance of BICM-ID with $4$-D $16$- and $32$-QAM over Rayleigh fading channels.}
\label{BER_4D_16_32QAM}
\end{figure}

\begin{figure}[h]
\centering
\includegraphics[width= 0.63\columnwidth,viewport= 15mm 0mm 335mm 225mm,clip]{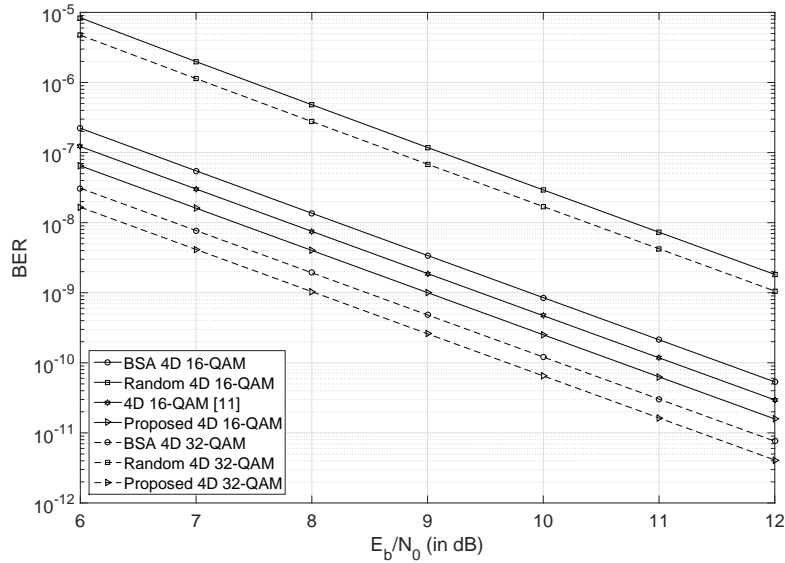}
\caption{Error-floor bounds of BER for BICM-ID with $4$-D  $16$- and $32$-QAM over Rayleigh fading channels.}
\label{EF_4D_16_32QAM}
\end{figure}

Table \ref{Parameters_16_32QAM} lists the values of $\Phi(\mu,\mb{\chi})$ and $\hat{\Phi}(\mu,\mb{\chi})$ for different $4$-D mappings of $16$- and $32$-QAM. It can be seen from this table that, except for the value of $\Phi(\mu,\mb{\chi})$ for the $4$-D $16$-QAM where the proposed mapping offers the same value as that of the mapping in \cite{MD_16_64QAM}, our proposed mappings significantly improve the values of $\Phi(\mu,\mb{\chi})$ and $\hat{\Phi}(\mu,\mb{\chi})$, compared to their well-known counterparts. Therefore, it is expected that the proposed mappings improve the error rate performance compared to  the previously known mappings in both the low and high SNR regions. This is confirmed by the simulation results for the BER performance of BICM-ID shown in Fig. \ref{BER_4D_16_32QAM} and by the error-floor bounds plotted in Fig. \ref{EF_4D_16_32QAM}. In the case of $4$-D $16$-QAM, as illustrated in Fig. \ref{BER_4D_16_32QAM}, in the low SNR region, the proposed mapping achieves a gain of $1$ and $1.55$ dB over the BSA and random mappings, respectively. Although as shown in Fig. \ref{BER_4D_16_32QAM} the BER performance of the proposed $4$-D $16$-QAM is similar to that of the mapping in \cite{MD_16_64QAM}, our mapping outperforms the mapping in \cite{MD_16_64QAM} in the high SNR region by about $0.5$ dB, as shown in Fig. \ref{EF_4D_16_32QAM}. This figure also shows that the proposed mapping improves the error-floor by $0.9$ and $3.4$ dB compared to the BSA and random mappings, respectively. In the case of $4$-D $32$-QAM,  our proposed mapping outperforms the best previously known mappings, i.e., the BSA and random mappings, by $1.4$ and $1.9$ dB, respectively, which is illustrated in Fig. \ref{BER_4D_16_32QAM}. Fig. \ref{EF_4D_16_32QAM} shows that the corresponding gain in the high SNR region is $0.5$ and $4$ dB, respectively.

\subsubsection{MD mapping of higher order modulations}

\begin{table}[b]
\caption{Comparison of $\Phi(\mu,\mb{\chi})$ and $\hat{\Phi}(\mu,\mb{\chi})$ for different mappings.}
\centering
\resizebox{0.5\columnwidth}{!}{%
\begin{tabular}{|c|c||c|c|}
\hline
\multicolumn{2}{|c||}{$4$D Mapping} & $\Phi(\mu,\mb{\chi})$ & $\hat{\Phi}(\mu,\mb{\chi})$ \\
\hline\hline
\multirow{4}{*}{$64$-QAM} & BSA mapping &  0.0481 & 2.6899 \\ \cline{2-4}
&Random mapping & 0.0478  & 1.1688   \\ \cline{2-4}
&Mapping in \cite{MD_16_64QAM} & 0.0579  & 2.8166 \\ \cline{2-4}
& Proposed mapping & 0.0568 & 3.1683  \\ \thickhline

\multirow{3}{*}{$128$-QAM} & BSA mapping &  0.0245 & 1.8566 \\ \cline{2-4}
&Random mapping & 0.0245  & 1.1430 \\ \cline{2-4}
&Proposed mapping & 0.0294 & 3.2273 \\ \thickhline

\multirow{3}{*}{$256$-QAM} & BSA mapping &  0.0118 & 1.1282 \\ \cline{2-4}
&Random mapping & 0.0120 & 1.1034 \\ \cline{2-4}
&Proposed mapping & 0.0144 & 3.2389 \\ \hline

\end{tabular}
\label{Parameters_higher_order}
}
\end{table}

\begin{figure}[t]
\centering
\includegraphics[width= 0.75\columnwidth,viewport= 10mm 0mm 355mm 225mm,clip]{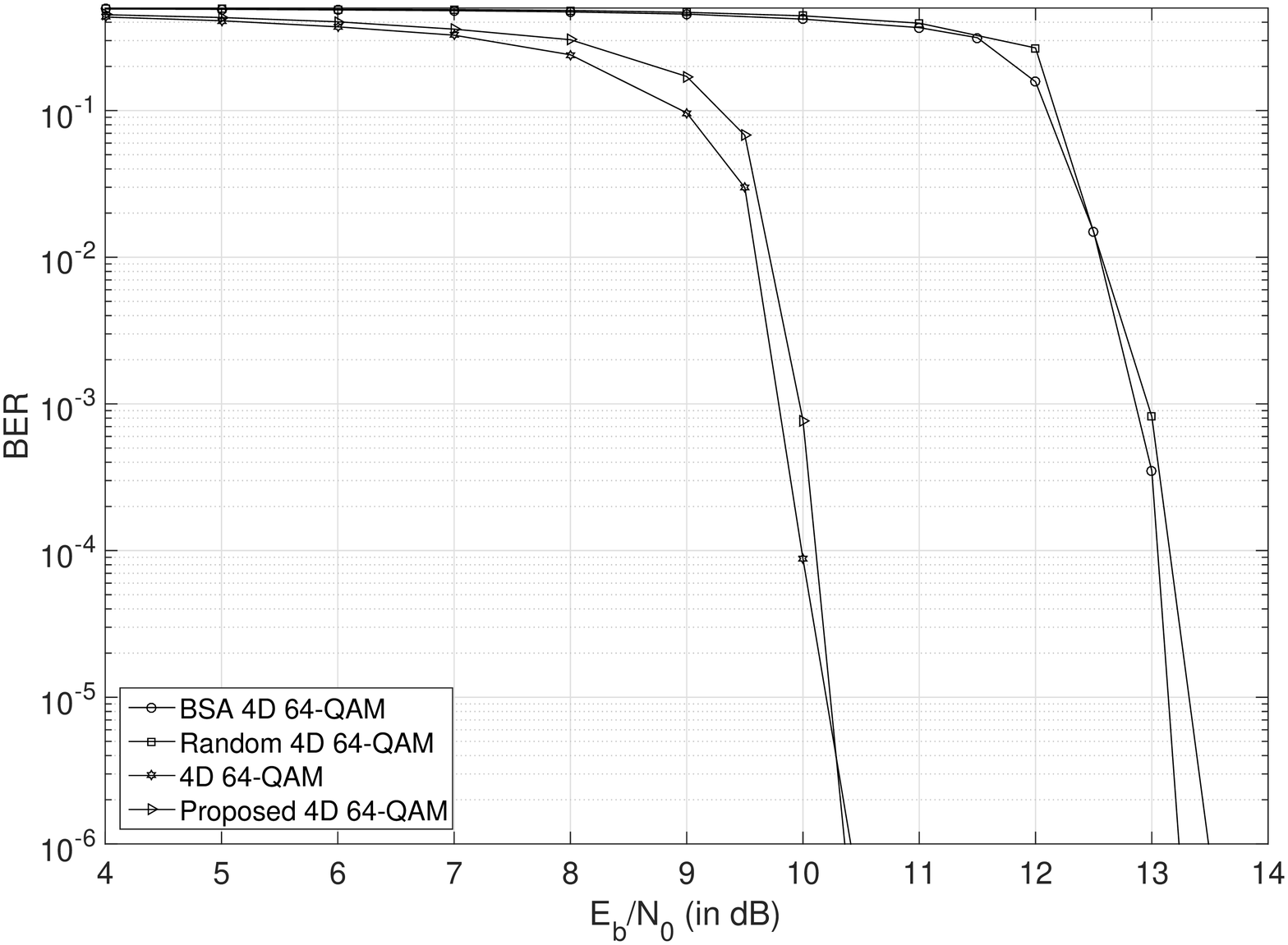}
\caption{BER performance of BICM-ID with $4$-D $64$-QAM over Rayleigh fading channels.}
\label{BER_4D_64QAM}
\end{figure}

\begin{figure}[h]
\centering
\includegraphics[width= 0.75\columnwidth,viewport= 10mm 0mm 355mm 225mm,clip]{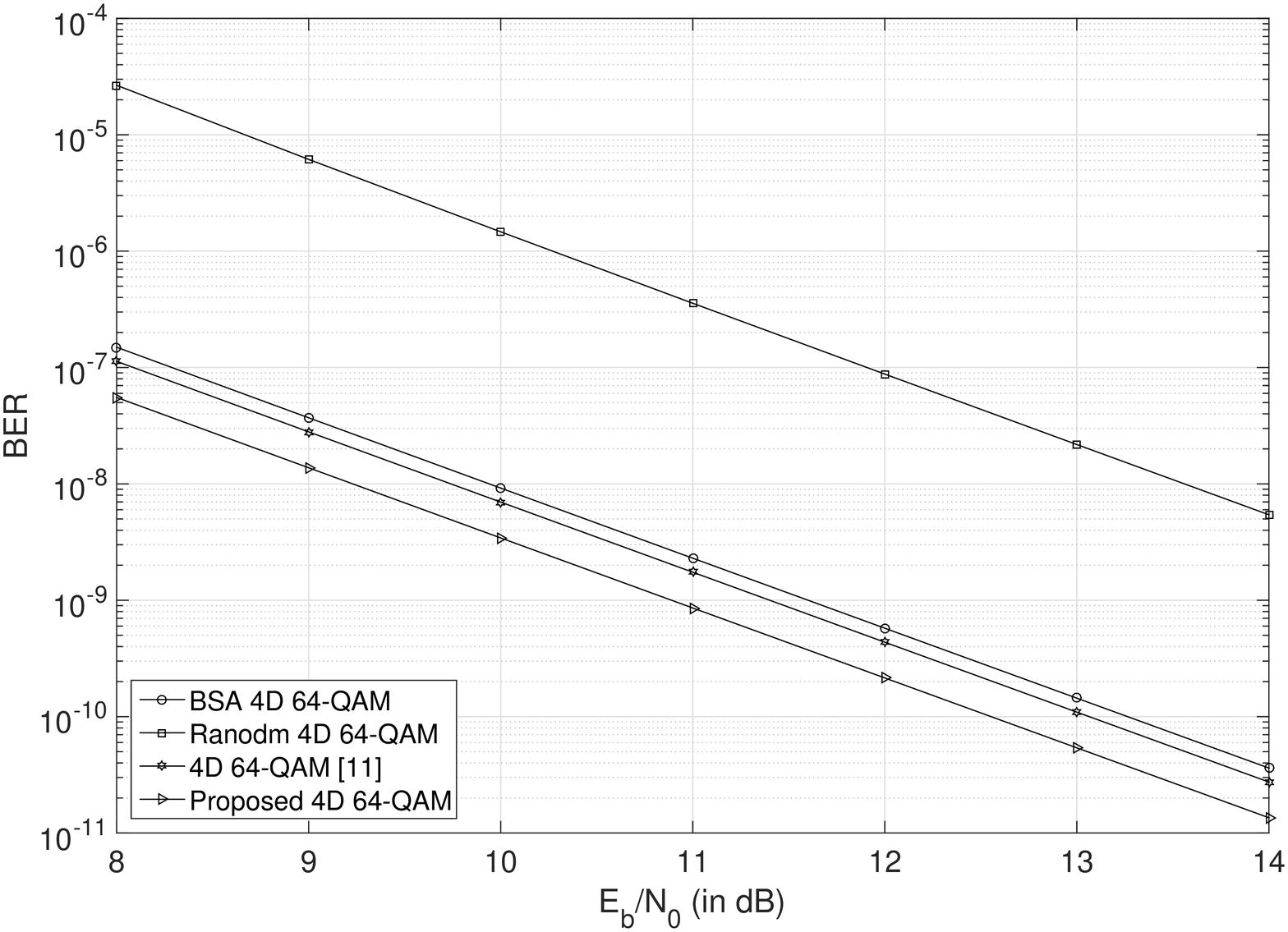}
\caption{Error-floor bounds of BER for BICM-ID with $4$-D  $64$-QAM over Rayleigh fading channels.}
\label{EF_4D_64QAM}
\end{figure}

Table \ref{Parameters_higher_order} reports the values of  $\Phi(\mu,\mb{\chi})$ and $\hat{\Phi}(\mu,\mb{\chi})$ for  MD mappings of different higher order modulations.  It can be observed from this table that our proposed mappings improve these values compared to the previously known mappings, except for the case of the MD $64$-QAM in \cite{MD_16_64QAM}, where our mapping offers a similar value of  $\Phi(\mu,\mb{\chi})$. 
Therefore, it is expected that the proposed mappings offer improved error performance in both the low and high SNR regions. This is confirmed by the simulation results for the BER and by the analytical error bounds in Fig. \ref{BER_4D_64QAM}-\ref{EF_4D_128_256QAM}. In the case of $4$-D $64$-QAM, our proposed mapping outperforms the BSA and random mappings in the low SNR region by $2.9$ and $3.1$ dB, respectively, as shown in Fig. \ref{BER_4D_64QAM}. Although in Fig. \ref{BER_4D_64QAM} our proposed mapping offers similar performance to that of the mapping in \cite{MD_16_64QAM},  the proposed mapping outperforms the mapping in \cite{MD_16_64QAM} by over $0.5$ dB in the high SNR region, as shown in Fig. \ref{EF_4D_64QAM}. In this figure, the achieved gain over the BSA and random mappings is $0.75$ and $4.3$ dB, respectively. As shown in Fig. \ref{BER_4D_128_256QAM}, for $4$-D $128$-QAM in the low SNR region, the proposed mapping achieves the gain of $3.1$ and $3$ dB over the BSA and random mappings, respectively. The corresponding gain for the case of $4$-D $256$-QAM is $3.5$ and $3.2$ dB, respectively. For the case of $4$-D $128$-QAM in the high SNR region, our proposed mapping improves the error-floor by $2$ and $4.5$ dB compared to the BSA and random mappings, respectively, as illustrated in Fig. \ref{EF_4D_128_256QAM}. The corresponding gain for the case of $4$-D $256$-QAM is $4.5$ and $4.6$ dB, respectively. 

\FloatBarrier
\begin{figure}[H]
\centering
\includegraphics[width= 0.75\columnwidth,viewport= 20mm 0mm 365mm 225mm,clip]{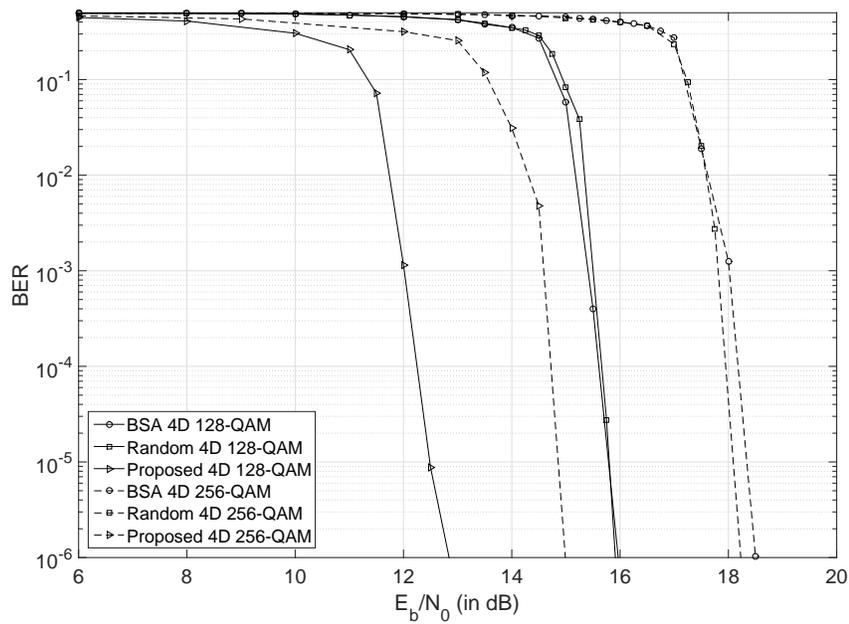}
\caption{BER performance of BICM-ID with $4$-D $128$- and $256$-QAM over Rayleigh fading channels.}
\label{BER_4D_128_256QAM}
\end{figure}

\FloatBarrier
\begin{figure}[h!]
\centering
\includegraphics[width= 0.75\columnwidth,viewport= 18mm 0mm 365mm 225mm,clip]{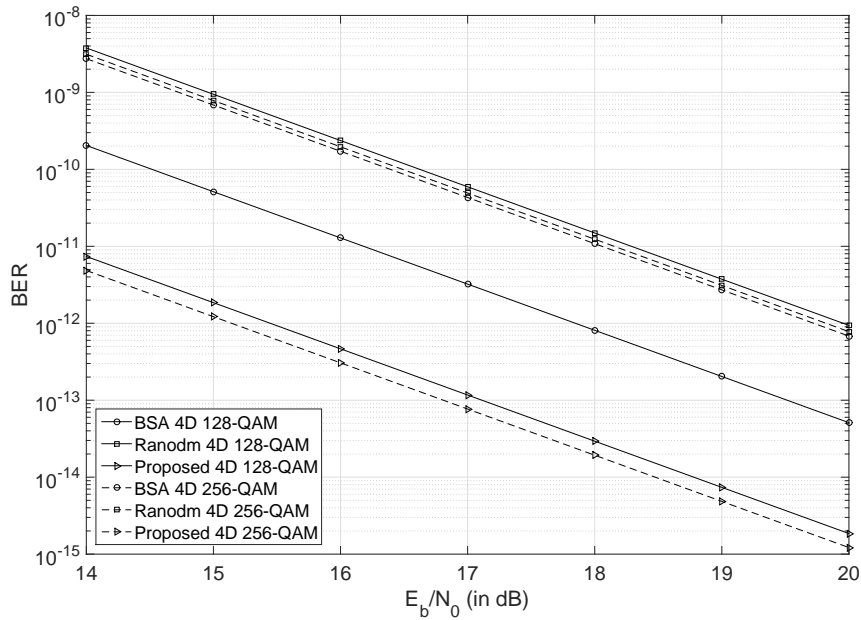}
\caption{Error-floor bounds of BER for BICM-ID with $4$-D $128$- and $256$-QAM over Rayleigh fading channels.}
\label{EF_4D_128_256QAM}
\end{figure}

It can be observed from Fig. \ref{BER_4D_6D_8ary}-\ref{EF_4D_128_256QAM} that the BSA results become less efficient  as the modulation order increases. As such,  random mapping performs better than the BSA mapping  for $4$-D $256$-QAM, as shown in Fig. \ref{BER_4D_128_256QAM}.  

Fig. \ref{BER_LDPC_BICMID_1024} plots the BER curves for different MD mappings of $1024$-QAM. As it can be seen from this figure, the proposed mapping significantly outperforms the other counterparts. It is also worth noting that in this figure, the random mapping outperforms  the BSA mapping. This is because  when looking for a mapping for $4$-D $1024$-QAM, the BSA is unable to finish one round of the algorithm in a reasonable time frame. As a result, the obtained BSA  mapping is not very suitable.

\begin{figure}[H]
\begin{center}
\includegraphics[width=.7\columnwidth, viewport= 10mm 0mm 325mm 225mm,clip]{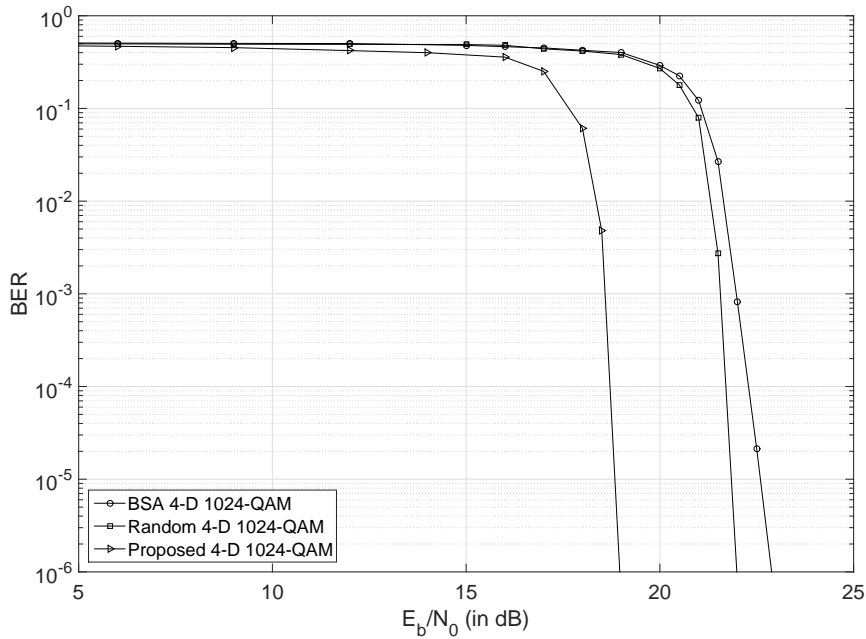} \caption{BER performance of BICM-ID with $4$-D $1024$-QAM over Rayleigh fading channels.} 
\label{BER_LDPC_BICMID_1024}
\end{center}
\end{figure}

\subsection{Analysis of Convergence Behaviour} 
The extrinsic information transfer chart (EXIT chart) \cite{EXIT} is a commonly used  metric to assess the convergence behavior of BICM-ID. In an EXIT chart, the area between the decoder curve and the demapper curve is called EXIT tunnel \cite{Exit1}. BICM-ID can achieve a coding gain through the iterative decoding process only when it provides an open EXIT tunnel. Fig. \ref{EXIT_Rayleigh} shows the EXIT charts for BICM-ID when using different $4$-D mappings of $128$-QAM  in the Rayleigh fading channel. For brevity in this section, we investigate the EXIT chart only for $4$-D  $128$-QAM. The results are similar for other considered modulations.   
From Fig. \ref{EXIT_Rayleigh}, it can be seen that  BICM-ID with the proposed mapping exhibits  an open Exit tunnel when $\frac{E_b}{N_0} = 6$ dB. This implies that when using the proposed mapping, the iterative decoding process starts to improve the performance of BICM-ID at $\frac{E_b}{N_0} = 6$. This is in accordance with the BER curve of the proposed mapping in Fig. \ref{BER_4D_128_256QAM}, where the BER curve falls 
gradually after $\frac{E_b}{N_0} = 6$. 
 Fig. \ref{EXIT_Rayleigh} also shows that the open EXIT tunnels for the BSA and random mappings appear at $\frac{E_b}{N_0} =11$. As a result,  when using the BSA and random mappings, the system BER will not improve unless after $\frac{E_b}{N_0} =11$, which is confirmed by the corresponding BER curves in Fig. \ref{BER_4D_128_256QAM}.
Consequently, the BER performance with our proposed mapping   improves through the iterative decoding $5$ dB earlier than those of the BSA and Random mappings.  
This results in an earlier turbo cliff for the BER curve with our proposed mapping in Fig. \ref{BER_4D_128_256QAM}. 
 
\begin{figure}[h!]
\centering
\includegraphics[width= 0.7\columnwidth,viewport= 20mm 0mm 320mm 225mm,clip]{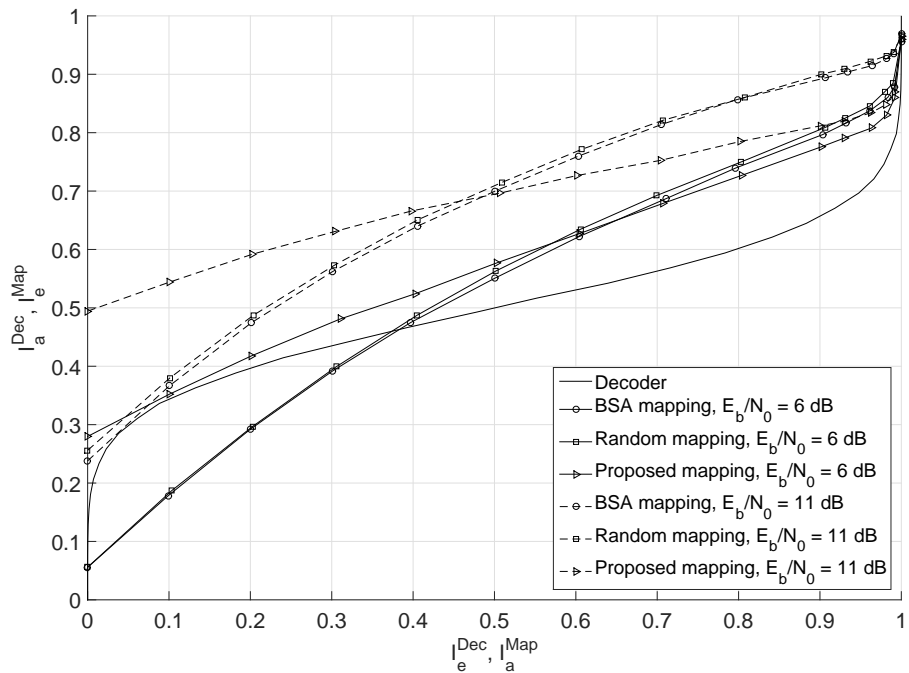}
\caption{EXIT chart for different $4$-D mappings of $128$-QAM in the Rayleigh fading channel.}
\label{EXIT_Rayleigh}
\end{figure}

\section{Conclusion}
\label{conc.}
A novel MD mapping method is proposed to improve the error performance of the BICM-ID system in both the low and high SNR regions over Rayleigh fading channels. The method uses four $2$-D mappings to construct an MD mapping that improves the error performance in the low SNR region. Furthermore, cost functions are developed and optimized over the $2$-D mappings to  achieve an MD mapping that improves the error performance in the high SNR region. Due to the lower complexity of the $2$-D space, the optimization approach is very simple and results in excellent MD mappings for different modulations, including higher order modulations such as $2^{m}$-QAM ($m = 4,..,10$). Extensive numerical results, including analytical and simulation results, show that the obtained mappings significantly outperform the previously known state of the art mappings in both the low and high SNR regions.  

\appendices
\section{Proof of Proposition \ref{prop1}}
\label{apdx1}

Let $\mb{i} = (i_1, i_2, \cdots, i_D)$ be the set of bit positions at which $\mb{l}_1$ and $\mb{l}_2$  are different. Let us define $o_{j}$ and $z_{j}$ as

\begin{eqnarray}
\label{prop0_1}
o_{j} = \sum_{k=1}^{D} I(l_{j}^{i_{k}} =1), \\ \nonumber
z_{j} = \sum_{k=1}^{D} I(l_{j}^{i_{k}} =0),
\end{eqnarray} where $j=1,2$ and $I(\theta)$ is an indicator function that takes value one if $\theta$ is true, otherwise it equals zero. Clearly, $z_{1} = o_2$ and $z_2 = o_1$. Let  $\bar{o}$ be the number of bit positions in which both $\mb{l}_1$ and $\mb{l}_2$ represent the bit value one. As $w_1 = o_1+\bar{o}$ and $w_2 = o_2+\bar{o}$, we can write 
\begin{eqnarray}
\label{prop0_2}
W = w_1+w_2 &=& (o_1+\bar{o}) + (o_2+\bar{o}),\\ \nonumber
 &=& o_1 + o_2 + 2\bar{o},\\ \nonumber
 & \overset{(o_2 = z_1)}{=} & o_1 + z_1 + 2\bar{o},\\ \nonumber
 &=& D + 2\bar{o}.
\end{eqnarray} In (\ref{prop0_2}), $2\bar{o}$ is an even number; as a result,  if $W \in \mathbb{E}$, $D \in \mathbb{E}$, and if  $W \in \mathbb{O}$, $D \in \mathbb{O}$.

\section{Proof of Proposition \ref{prop2}}
\label{apdx2}

Let $\mb{l}$ and $\tilde{\mb{l}}$ be two $mN$-bit labels. If $\mb{l}, \tilde{\mb{l}} \in \mathcal{L}_{e}$ or $\mb{l}, \tilde{\mb{l}} \in \mathcal{L}_{o}$, the corresponding value for $W$ in Proposition \ref{prop1} is an even number. As a result, $\mb{l}$ and $\tilde{\mb{l}}$ 
have an even Hamming distance from each other. \color{black}
Similarly, two labels one from $\mathcal{L}_{e}$ and the other from $\mathcal{L}_{o}$  have an odd Hamming distance from each other. Since there is no common $2$-D signal point between $\chi_{el}$ and $\chi_{ol}$, there is no common $2N$-D signal point between $\mb{\chi}_{e}$ and $\mb{\chi}_{o}$. As a result, none of the $2N$-D signal points will be mapped simultaneously by a label from $\mathcal{L}_{e}$ and a label from $\mathcal{L}_{o}$. Therefore, it is sufficient to prove that there is a one-to-one correspondence between labels from $\mathcal{L}_{e}$ and signal points from $\mb{\chi}_{e}$ and similarly between labels in $\mathcal{L}_{o}$ and signal points in $\mb{\chi}_{o}$. In what follows, we prove this for the even subsets, i.e., for labels in $\mathcal{L}_{e}$ and signal points in $\mb{\chi}_{e}$.

Assume that $\mb{l} = (l_{1}, l_{2}, \cdots, l_{mN})$ and $\tilde{\mb{l}} = (\tilde{l}_{1}, \tilde{l}_{2}, \cdots, \tilde{l}_{mN})$ are two labels in $\mathcal{L}_{e}$ and are mapped to  $\mb{x} = (x_{1}, \cdots, x_{N})$ and $\tilde{\mb{x}} = (\tilde{x}_{1}, \cdots, \tilde{x}_{N})$, respectively, where both $\mb{x}$ and $\tilde{\mb{x}}$ are in $\mb{\chi}_{e}$. Let us define $\mb{l}_{i}$ and $\tilde{\mb{l}}_{i}$ as the $i^{th}$ $m$-tuple bits of $\mb{l}$ and $\tilde{\mb{l}}$, respectively. Then $\mb{l} = (\mb{l}_{1}, \mb{l}_{2}, \cdots, \mb{l}_{N})$ and $\tilde{\mb{l}} = (\tilde{\mb{l}}_{1}, \tilde{\mb{l}}_{2}, \cdots, \tilde{\mb{l}}_{N})$. Based on the relation between $\mb{l}_{i}$ and $\tilde{\mb{l}}_{i}$ for different values of $i$, there are two possible cases as follows:

\emph{Case 1:} There exists a value of $i$ ($i\geqslant 2$) such that $\mb{l}_{i} \neq \tilde{\mb{l}}_{i}$.
Let $j\geqslant 2$, then according to (\ref{mapping_func}), the same one-to-one mapping function, i.e., $\lambda_{er}(.)$, is used to map $\mb{l}_{j}$ to $x_{j}$ and $\tilde{\mb{l}}_{j}$ to $\tilde{x}_{j}$. Therefore, because $\mb{l}_{j} \neq \tilde{\mb{l}}_{j}$, we have 

\begin{equation}
\label{prop1_case1}
x_{j} \neq \tilde{x}_{j} \Rightarrow \mb{x} \neq \tilde{\mb{x}}.
\end{equation}


\emph{Case 2:} $\mb{l}_{i}$ = $\tilde{\mb{l}_{i}}$ for all $i \geqslant 2$.
In this case, $\mb{l}_{i} \neq \tilde{\mb{l}_{i}}$ only when $i=1$, and as a result, \color{black} $d_{H}(\mb{l},\tilde{\mb{l}}) = d_{H}(\mb{l}_{1},\tilde{\mb{l}}_{1})$.  \color{black} Since $\mb{l}$ and $\tilde{\mb{l}}$ belong to $\mathcal{L}_{e}$, they have an even Hamming distance from each other. Consequently, the Hamming distance between $\mb{l}_{1}$ and $\tilde{\mb{l}}_{1}$ is even as well. However, the two labels that are mapped to each symbol in $\chi_{el}$ have an odd Hamming distance from each other. Therefore, because $ \lambda_{el}(\mb{l}_{1}) \neq \lambda_{el}(\tilde{\mb{l}}_{1})$, we have

\begin{eqnarray}
\label{prop1_case2}
 x_{1} \neq \tilde{x}_{1} 
\Rightarrow \mb{x} \neq \tilde{\mb{x}}.
\end{eqnarray}
From (\ref{prop1_case1}) and (\ref{prop1_case2}), it is concluded that in the proposed mapping function, different labels from $\mathcal{L}_{e}$ are mapped to different signal points in $\mb{\chi}_{e}$. In a similar way, it can be proven  that the different labels from $\mathcal{L}_{o}$ are mapped to the different signal points in $\mb{\chi}_{o}$. As a result, the proposed MD mapping function is bijective.


\section{Proof of Proposition \ref{prop3}}
\label{apdx3}
Similar to proposition \ref{prop1}, assume that $\mb{l} = (\mb{l}_{1}, \mb{l}_{2}, \cdots, \mb{l}_{N})$ and $\mb{l}^{\prime} = (\mb{l}^{\prime}_{1}, \mb{l}^{\prime}_{2}, \cdots, \mb{l}^{\prime}_{N})$  are two labels in $\mathcal{L}_{e}$ and are mapped to  $\mb{x} = (x_{1}, \cdots, x_{N})$ and $\mb{x}^{\prime} = (x_{1}^{\prime}, \cdots, x_{N}^{\prime})$, respectively, where both $\mb{x}$ and $\mb{x}^{\prime}$ are in $\mb{\chi}_{e}$, and   $\mb{l}_{i}$ and $\mb{l}^{\prime}_{i}$ are the $i^{th}$ $m$-tuple bits of $\mb{l}$ and $\mb{l}^{\prime}$, respectively. Let us assume that  $d_{H}(\mb{l},\mb{l}^{\prime}) \geq m+2$  and $j$ is the number of values for $i$ such that $\mb{l}_{i} \neq \mb{l}^{\prime}_{i}$. Note that $d_{H}(\mb{l},\mb{l}^{\prime}) > m$, thus $j>1$. Based on the value of $j$, there are two possible cases as follows. 

\emph{Case 1 (when $j=2$):} Assume that $\mb{l}_{i}\neq\mb{l}^{\prime}_{i}$ for $i = p, q$, where $q>p \geqslant 1$. We have 

\begin{eqnarray}
\label{prop3_case1}
 d_{H}(\mb{l},\mb{l}^{\prime}) =  d_{H}(\mb{l}_{p},\mb{l}^{\prime}_{p}) +d_{H}(\mb{l}_{q},\mb{l}^{\prime}_{q}).
\end{eqnarray} Let $p=1$. Because $\mb{l}_{q}$ and $\mb{l}^{\prime}_{q}$ are $m$-tuple vectors, $d_{H}(\mb{l}_{q},\mb{l}^{\prime}_{q}) \leq m$. Moreover, $d_{H}(\mb{l},\mb{l}^{\prime}) \geq m+2$. Thus, using (\ref{prop3_case1}) we have $d_{H}(\mb{l}_{p},\mb{l}^{\prime}_{p}) \geq 2$. Since $p=1$, the mapping function  $\lambda_{el}$ in (\ref{mapping_func}) is used to map $\mb{l}_{p}$ and $\mb{l}^{\prime}_{p}$. Note that in $\lambda_{el}$, two $m$-tuple labels with Hamming distance more than one bit cannot be mapped to the same $2$-D symbol. As a result, we have $\lambda_{el}(\mb{l}_{p})\neq\lambda_{el}(\mb{l}^{\prime}_{p})$, and therefore, $x_{p} \neq x_{p}^{\prime}$. Similarly, since $q > 1$, the mapping function  $\lambda_{er}$ in (\ref{mapping_func}) is used to map $\mb{l}_{q}$ and $\mb{l}^{\prime}_{q}$. Moreover, $\lambda_{er}$ in (\ref{mapping_func}) maps different $m$-tuple labels to different $2$-D symbols. Therefore, as $\mb{l}_{q} \neq \mb{l}^{\prime}_{q}$,  $\lambda_{er}(\mb{l}_{q})\neq\lambda_{er}(\mb{l}^{\prime}_{q})$, and as a result, $x_{q} \neq x_{q}^{\prime}$. Let $p > 1$. In this case, $\lambda_{er}$ is used to map $\mb{l}_{p}$, $\mb{l}^{\prime}_{p}$, $\mb{l}_{q}$, and $\mb{l}^{\prime}_{q}$. As $\mb{l}_{p} \neq \mb{l}^{\prime}_{p}$  and $\mb{l}_{q} \neq \mb{l}^{\prime}_{q}$,  $\lambda_{er}(\mb{l}_{p})\neq\lambda_{er}(\mb{l}^{\prime}_{p})$ and $\lambda_{er}(\mb{l}_{q})\neq\lambda_{er}(\mb{l}^{\prime}_{q})$. As a result, $x_{p} \neq x_{p}^{\prime}$ and $x_{q} \neq x_{q}^{\prime}$. Therefore, for all values of $p$ and $q$, $\mb{x}$ and $\mb{x}^{\prime}$ are different in more than one symbol. 

\emph{Case 2 (when $j\geqslant 3$):} There are at least two values, $p$ and $q$, such that $p>q > 1$ and $\mb{l}_{i}\neq\mb{l}^{\prime}_{i}$ when $i = p, q$. In this case, $\lambda_{er}$ is used to map $\mb{l}_{p}$, $\mb{l}^{\prime}_{p}$, $\mb{l}_{q}$, and $\mb{l}^{\prime}_{q}$. As $\mb{l}_{p} \neq \mb{l}^{\prime}_{p}$  and $\mb{l}_{q} \neq \mb{l}^{\prime}_{q}$,  $\lambda_{er}(\mb{l}_{p})\neq\lambda_{er}(\mb{l}^{\prime}_{p})$ and $\lambda_{er}(\mb{l}_{q})\neq\lambda_{er}(\mb{l}^{\prime}_{q})$. 
As a result, $x_{i} \neq x_{i}^{\prime}$ when $i = p,q$. Therefore, $\mb{x}$ and $\mb{x}^{\prime}$ are different in more than one symbol. 

Consequently,  in the proposed MD mapping function, when the Hamming distance between two labels in $\mathcal{L}_{e}$ is larger than ($m+1$) bits, the corresponding symbols-vectors in  $\mb{\chi}_{e}$ are different in more than one symbol, and therefore, cannot be the nearest neighbours. The same characteristic can be proven for the labels in $\mathcal{L}_{o}$ and the corresponding symbol-vectors in $\mb{\chi}_{o}$.

\section{Proof of Proposition \ref{prop4}}
\label{apdx4}
Since $y_{i} \geqslant 0$ for all $i$, then 

\begin{equation}
 \frac{1}{\sum_{i = 1}^{N} y_{i}} \leqslant \frac{1}{y_{j}}, ~~~ j = 1, 2, \cdots, N.
\label{Prop1_Inq2}
\end{equation} By taking the summation over all values of $j$ from both sides of (\ref{Prop1_Inq2}), the inequality in (\ref{Prop1_Inq1}) can be written as 

\begin{eqnarray}
\label{Prop1_Inq3}
\frac{1}{\sum_{i = 1}^{N} y_{i}} \leqslant \frac{1}{N} \sum_{j=1}^{N} \frac{1}{y_{j}}.
\end{eqnarray}

\section{Proposed mappings}
\label{app_mappings}

\begin{table*} 
\caption{Proposed $\lambda_{er}$, $\lambda_{or}$, $\lambda_{el}$,  and $\lambda_{ol}$ for $32$-QAM.}
\centering
\resizebox{.8\columnwidth}{!}{%

\label{MD_1024QAM_Map}
}
\end{table}

\section*{Acknowledgment}

\end{document}